\newcommand{\var}[1]{\text{\lstinline+#1+}}
\newtheorem{theorem}{Theorem}
\newtheorem{lemma}{Lemma}
\newtheorem{informal}{Informal Axiom}
\newtheorem{axiom}{Axiom}
\newcommand{\ba}{\ensuremath{{\boldsymbol a}}}   
\newcommand{\bb}{\ensuremath{{\boldsymbol b}}}    
\newcommand{\bc}{\ensuremath{{\boldsymbol c}}}    
\newcommand{\bd}{\ensuremath{{\boldsymbol d}}}   
\newcommand{\br}{\ensuremath{{\boldsymbol r}}}   
\newcommand{\bt}{\ensuremath{{\boldsymbol t}}}   
\newcommand{\bv}{\ensuremath{{\boldsymbol v}}}   
\newcommand{\bw}{\ensuremath{{\boldsymbol w}}}
\newcommand{\bx}{\ensuremath{{\boldsymbol x}}}
\newcommand{\by}{\ensuremath{{\boldsymbol y}}}
\newcommand{\bz}{\ensuremath{{\boldsymbol z}}}
\newcommand{\bone}{\ensuremath{{\boldsymbol 1}}}
\newcommand{\bzero}{\ensuremath{{\boldsymbol 0}}}
\newcommand{\set}[1]{\left\{ #1 \right\}}
\renewcommand{\int}{\ensuremath{\mathit{int}}}
\newcommand{\upper}{\ensuremath{\mathit{upper}}}
\newcommand{\epi}{\ensuremath{\mathit{epi}}}
\newcommand{\eqnlabel}[1]{\label{eq:#1}}
\newcommand{\nakedeqnref}[1]{\ref{eq:#1}}
\newcommand{\eqnref}[1]{Equation~\nakedeqnref{#1}}
\newcommand{\lemmalabel}[1]{\label{lemma:#1}}
\newcommand{\nakedlemmaref}[1]{\ref{lemma:#1}}
\newcommand{\lemmaref}[1]{Lemma~\nakedlemmaref{#1}}
\newcommand{\thmlabel}[1]{\label{thm:#1}}
\newcommand{\seclabel}[1]{\label{sec:#1}}
\newcommand{\nakedsecref}[1]{\ref{sec:#1}}
\newcommand{\secref}[1]{Section~\nakedsecref{#1}}
\newcommand{\Reals}{\ensuremath{\mathbb{R}}}
\newcommand{\PosReals}{\ensuremath{\mathbb{R}_{>0}}}
\newcommand{\NNegReals}{\ensuremath{\mathbb{R}_{\geq 0}}}
\newcommand{\NegReals}{\ensuremath{\mathbb{R}_{<0}}}
\begin{document}

\title{Composing Networks of Automated Market Makers}


\author{Daniel Engel}
\email{daniel_engel1@brown.edu}
\affiliation{%
  \institution{Brown University Computer Science Department}
}

\author{Maurice Herlihy}
\email{maurice.herlihy@gmail.com}
\affiliation{%
  \institution{Brown University Computer Science Department}
}

\begin{abstract}
Automated market makers (AMMs) are automata that trade electronic assets at rates set by mathematical formulas. AMMs are usually implemented by smart contracts on blockchains.
In practice,
AMMs are often \emph{composed}:
trades can be split across AMMs,
and outputs from one AMM can be directed to another.
This paper proposes a mathematical model for AMM composition.
We define sequential and parallel composition operators for AMMs
in a way that ensures that AMMs are closed under composition,
in a way that works for ``higher-dimensional'' AMMs that
manage more than two asset classes, and so the composition
of AMMs in ``stable'' states remains stable.

\end{abstract}


\maketitle

\section{Introduction}
Decentralized finance (or ``DeFi'') has become a booming area of
distributed computing.
For example, between June 2020 and October 2020,
the total value of assets locked in decentralized finance (DeFi) protocols
surged from \$1 billion to \$7.7 billion~\cite{defipulse}.
An \emph{automated market maker} (``AMM'') is an automaton that has
custody of several pools of assets,
sets prices for those assets according to a mathematical formula,
and is always willing to trade those assets at those prices.
Unlike traditional ``order book'' traders,
AMMs do not need to match up (and wait for) compatible buyers and sellers.
Today, AMMs such as Uniswap~\cite{uniswap}, Bancor~\cite{bancor}, Balancer~\cite{balancer},
and others~\cite{pourpouneh} have become one of the most popular ways to trade electronic assets.

Here is an example of a \emph{constant-product} AMM
loosely based on Uniswap v1~\cite{uniswap}.
The AMM has state $(x,y)$ if it has custody of $x$ units of asset $X$,
and $y$ units of asset $Y$.
The AMM's state is subject to the invariant that the product $xy$ is
constant.
The AMM's states thus lie on the hyperbolic curve $x y = c$, 
for $x,y > 0$ and constant $c > 0$.
If a trader transfers $dx$ units of $X$ to the AMM, 
the AMM will transfer $dy$ units of $Y$ back to the trader,
preserving the invariant $(x+dx) (y - dy) = x y = c$.
The client profits if the value of $dy$ units of $Y$ exceeds
the value of $dx$ units of $X$ in the current market (or at another AMM).

The \emph{price} of asset $Y$ in units of $X$ at state $(x,y)$
is the curve's slope at that point.
Trades move the AMM's state along the curve:
trading $X$ for $Y$ makes $X$ cheaper and $Y$ more expensive,
a phenomenon known as \emph{slippage}.
Usually, an AMM's state reflects current market conditions:
if a constant-product AMM is in state $(x,y)$,
then the market value of $x$ units of $X$ should be the same as $y$ units of $Y$,
because otherwise a trader can make an arbitrage profit by buying
the undervalued asset.
Note that a constant-product AMM can adjust to any
(finite, non-zero)
market rate between $X$ and $Y$,
and every AMM state matches some market valuation.

It is natural to compose AMMs.
If AMM $A$ trades assets $X$ and $Y$,
and $B$ trades assets $Y$ and $Z$,
then a trader can buy $Z$ with $X$ by transferring $X$ to $A$,
feeding $A$'s $Y$ output to $B$, and pocketing $B$'s $Z$ output.
Existing systems of AMMs encourage exactly this kind of composition:
for Uniswap v1, the intermediate asset $Y$ would be ether (ETH),
and for Bancor, it would be Bancor network token (BNT).
Here is the question at the heart of this paper:
can we treat the result of composing these two constant-product AMMs
as a ``black box'' AMM for trading $X$ for $Z$?
Note that this composition is not itself a constant-product AMM,
so the class of constant-product AMMs is not closed under
composition for any reasonable notion of composition.
Can we instead pick a broader AMM definition that does support
common-sense notions of composition?
What constraints on pricing formulas yield AMMs that behave reasonably under composition?
What should composition mean when AMMs trade more than two kinds of assets?
In short, what notions of AMM composition make sense?

This paper makes the following contributions.
\begin{itemize}
\item 
We give an axiomatic characterization for well-behaved AMMs.
These axioms build on prior work,
but require small but critical changes to support
mathematical properties such as composition.

\item
For AMMs satisfying these axioms,
we show there is a duality between asset valuations,
and AMM states where assets are balanced to reflect those valuations.
Computation can be done in whichever domain is more convenient.

\item
The paper's principal contribution is to
propose novel operators for sequential and parallel
composition of AMMs.
Well-structured notions of composition for
``higher-dimensional'' AMMs that manage more than two asset classes
requires novel intermediate projection and virtualization operators.
\end{itemize}
Properly defined,
AMMs are mathematical objects that are closed under sequential and parallel composition.
We hope this paper will encourage further research into how AMMs can be combined
into networks and what such networks can do.

This paper is organized as follows.
\secref{model} describes the problem and the model of computation,
\secref{definitions} presents in axiomatic form the properties a practical AMM should satisfy,
\secref{topology} shows that all such AMMs have a common underlying topological structure,
and \secref{operators} presents basic mathematical operators useful for defining composition.
\secref{sequential} defines \emph{sequential composition},
where the outputs of one trade becomes the inputs to another.
Some careful choices are needed to impose order on composition of ``higher-dimensional''
AMMs that manage more than two assets.
\secref{parallel} defines \emph{parallel composition},
where a trader decides how to split a trade among alternative AMMs.
\secref{fees} shows how AMMs with fees fit into out composition framework,
\secref{related} surveys related work,
and \secref{conclusions} discusses future directions and open problems.

\section{Model}
\seclabel{model}
We use the following notation and terminology.
Vectors are in bold face $(\bx)$ and scalars in italics ($x$).
Variables, scalar or vector,
are usually taken from the end of the alphabet ($x,y,z$),
and constants from the beginning ($a,b,c$).
If $\ba = (a_1, \ldots, a_n)$ and
$\bb = (b_1, \ldots, b_m)$,
then $(\ba,\bb)$ is the vector $(a_1, \ldots, a_n, b_1, \ldots, b_m)$.
Vector comparisons are component-wise:
for vectors of the same dimension,
$\ba \leq \bb$ means that each $a_i \leq b_i$.
Constant vectors
$\bone = (1, \ldots, 1)$ and $\bzero = (0, \ldots, 0)$ have dimension that will be clear from context.
$\PosReals = \set{x \in \Reals | x > 0}$,
and $\NegReals = \set{x \in \Reals | x < 0}$.
We use $\bx \lneqq \bx'$ to mean that $\bx \leq \bx'$ but $\bx \neq \bx'$,
so at least one inequality is strict.

A function $f: \Reals^n \to \Reals$ is \emph{strictly convex}
if for all $t \in (0,1)$ and distinct $\bx,\bx' \in \Reals^n$,
$f(t \bx + (1-t) \bx') < tf(\bx) + (1-t)f(\bx')$.
A convex function's tangent line or plane lies below the function's curve or surface.
The function is \emph{strictly concave} if the inequality is reversed.
A set $X$ is \emph{strictly convex} if for all distinct
$x,x' \in X$ and $t \in (0,1)$,
$t x + (1-t) x'$ lies in the interior of $X$.

For $A: \PosReals^n \to \Reals$,
the set $\set{\bx \in \PosReals^n | A(\bx) = c}$ is called a \emph{level set} at $c$,
and the set $\set{\bx \in \PosReals^n | A(\bx) \geq c}$
is called the \emph{upper contour set} at $c$.

We use L1 and L2 norms: if $\bv = (v_1, \ldots, v_n)$ is a vector,
then $\|\bv\|_1 = \sum_{i=1}^n |v_i|$ and 
$\|\bv\|_2 = \sqrt{\sum_{i=1}^n v_i^2}$.

\subsection{AMM State Spaces are Manifolds}
An \emph{asset} might be a cryptocurrency, a token,
an electronic deed to property, and so on.
Assets can fluctuate in value,
and participants may want to trade assets,
perhapsto respond to past price changes,
or to anticipate future price changes.

An \emph{$n$-dimensional} AMM trades across assets $X_1, \ldots, X_n$.
Each AMM state has the form $\bx = (x_1,\ldots,x_n) \in \PosReals^n$,
where each $x_i$ is the amount of units of asset $X_i$ in the AMM's custody.
AMM states are points in a (twice) differentiable manifold,
a higher-dimensional generalization of a surface or curve.
A \emph{trade} moves the AMM from state
$\bx = (x_1,\ldots,x_n)$ to state $\bx' = (x_1',\ldots,x_n')$.
For each $i$ where $x_i' > x_i$ ,
the trader pays $x_i'-x_i$ units of $X_i$ to the AMM,
while if $x_i' < x_i$,
the AMM pays $x_i-x_i'$ units of $X_i$ to the trader.
We call $\bx-\bx'$ a \emph{profit-loss vector}.
Requiring the AMM state space to be a differentiable manifold ensures that
both prices and slippage change gradually rather than abruptly,
although we will see in \secref{definitions} that not every manifold makes sense as an AMM state space.

AMMs typically charge per-transaction fees.
For example, Uniswap v1 charges a 0.3\% fee on trades,
and the sums collected are added to the AMM's assets.
For ease of exposition, we focus initially on AMMs that do not divert
fees to their own liquidity pools
(instead, fees might be paid to a separate account).
In \secref{fees},
we show how an AMM with a Uniswap-style fee structure can be modeled as
the sequential composition of a no-fee AMM with a specialized ``linear'' AMM.

\subsection{System Model}
There are two kinds of participants in decentralized finance.
(1) \emph{Traders} transfer assets to AMMs, and receive assets back.
Traders can compose AMMs into networks in complicated ways.
(2) \emph{Liquidity providers} (or ``providers'') fund the AMMs by
lending assets, and receiving shares, fees, or other profits.
Traders and providers play a kind of alternating game:
traders modify AMM states by adding and removing assets,
and providers can respond by adding or removing assets,
reinvesting fees, or adjusting other AMM properties.

Today, AMMs are usually implemented by smart contracts on blockchains.
For our purposes,
a \emph{blockchain} is a highly-available, tamper-proof distributed ledger
that records which participants own various assets.
A \emph{smart contract} is a public program that controls how assets are
recorded and transferred on a blockchain.
Smart contracts typically support \emph{atomic transactions} that
allow traders to execute atomic sequences of trades on multiple AMMs.
The analysis given in this paper is largely independent of the
the particular technology used to implement blockchains and smart contracts.

\section{Definitions}
\seclabel{definitions}
\subsection{Common-Sense Axioms}
\seclabel{commonsense}
Although any automaton that trades assets can be regarded as an AMM,
only those AMMs that satisfy certain properties
make sense in practice.
To make this presentation self-contained,
we list some informal, common-sense axioms any practical AMM should satisfy,
and then restate these informal requirements as more precise mathematical properties.
These properties mirror prior proposals~\cite{AngerisC2020,Berenzon2020,krishnamachari2021dynamic},
with some adjustments to encompass higher-dimensional AMMs 
(those that trade more than two kinds of assets),
and to facilitate later introduction of composition operators.

\begin{informal}[Continuity]
Every AMM state should define precise rates of exchange
between every pair of assets,
and trades should change these rates gradually rather than abruptly.
\end{informal}

\begin{informal}[Expressivity]
An AMM must be able to adapt to any market conditions.
\end{informal}
If an AMM is unable to adapt to market conditions that
cause one asset to be undervalued with respect to the others,
then traders will drain all of the undervalued asset from the AMM
at the expense of the providers.

\begin{informal}[Stability]
Every AMM state should be the appropriate response to some possible market condition.
\end{informal}
If no market condition justifies entering a particular state,
then that state is superfluous.

\begin{informal}[Convexity]
Slippage should work to the disadvantage of the trader.
Buying more of asset $X$ should make $X$ more expensive, not less.
\end{informal}

Otherwise, a runaway effect can occur where traders are motivated
to buy more and more of an asset until the AMM's supply is exhausted.

The constant product AMM $A:=(x,c/x)$ is an example of an AMM
that conforms to these axioms.
By contrast,
consider the \emph{constant-sum} AMM $C:=a x + b y = c$,
which trades between assets $X$ and $Y$ at a fixed exchange rate.
Constant-sum AMMs fail to satisfy expressivity:
as long as the exchange rate matches the market rate,
a constant-sum AMM trades without slippage,
but as soon as the market rate departs from the AMM's exchange rate,
arbitrage traders will exhaust the AMM's supply of the undervalued asset,
to the detriment of the liquidity providers.
For this reason, with very few exceptions~\cite{Mstable},
constant-sum AMMs are not used in practice.
(Note, however, that the continuity axiom implies
that any AMM's behavior \emph{approximates} the behavior of
a constant-sum market maker when trades are sufficiently small.)
Henceforth, except when explicitly noted,
we use ``AMM'' as shorthand for ``AMM that satisfies these axioms''.

\subsection{AMMs and Valuations}
Formally, an $n$-dimensional AMM is given by a function 
$A:\PosReals^n \to \Reals$ such that:
\begin{itemize}
\item
  For all $c \geq 0$,
  the \emph{upper contour set} $A(\bx) \geq c$ is closed and strictly convex.
\item
  $A(\bx)$ is strictly increasing in each coordinate, and
\item
  $A$ is twice-differentiable.
\end{itemize}
We adopt the convention that the AMM's state space is the level set $A(\bx) = 0$, though sometimes we replace 0 with another constant.
For brevity, when there is no danger of confusion,
we use $A$ to refer to the AMM's function,
its state space, and the AMM itself.
We use $\upper(A)$ for $A$'s upper contour set at $0$.
We sometimes define an AMM by saying $A := F(x) = 0$
to mean the AMM $A$'s states lie on the curve $F(x) = 0$

We remark that restricting the domain of an AMM's function to all-positive coordinates is a ``without loss of generality'' convention,
since an AMM's trading behavior is unaffected
by any linear change of variables.

It is often convenient to express an AMM in an alternative form.
The implicit function theorem ~\cite{implicit} implies that for any point on the manifold,
all but one coordinate can be chosen freely,
and the remaining coordinate is a twice-differentiable 
convex function $f_i$ of the rest:
\begin{multline*}
  A(x_1, \ldots, x_{i-1}, f_i(x_1, \ldots, x_{i-1}, x_{i+1}, \ldots x_n), x_{i+1}, \ldots x_n) = 0.
\end{multline*}

An opinion on the relative values of assets $X_1, \ldots, X_n$ is
captured by a \emph{valuation} $\bv = (v_1, \ldots, v_n)$,
where\footnote{For simplicity, we assume valuations never assign an asset relative value 0 or 1.} $0 < v_i < 1$ and $\sum_i v_i = 1$.
A trader who moves an AMM from state $\bx$ to state $\bx'$
makes a profit if the dot product $\bv \cdot (\bx - \bx')$ is positive,
and otherwise incurs a loss.
The current \emph{market value} is a valuation accepted by most participants.

The \emph{standard simplex} $\Delta^n \in \PosReals^n$ is the set of points
$(v_1,\ldots,v_n)$ where for $i=1,\ldots,n$, $0 < v_i < 1$, and $\sum^n_{i=1} v_i = 1$.
Each valuation forms the barycentric coordinates of a point
in the interior of the standard $n$-simplex\footnote{This notation is non-standard but convenient; others define $\Delta^n$ to be a subset of $\PosReals^{n+1}$.}: $\int(\Delta^n)$.

A \emph{stable point} for an AMM $A$ and valuation $\bv$
is a point $\bx \in A$ that minimizes the dot product $\bv \cdot \bx$.
If $\bv$ is the market valuation,
then any trader can make an arbitrage profit
by moving the AMM from any state to a stable point,
and no trader can make a profit by moving the AMM out of a stable point.

Of course, this model is idealized in several ways.
Asset pools are not continuous variables: they assume discrete values.
Computation is not infinite-precision:
round-off errors and numerical instability are concerns.
Popular AMM such as Uniswap v2~\cite{uniswapv2} and v3~\cite{uniswapv3} perform trades under a more complicated and dynamic model than the one
considered here.
Nevertheless, the problem of AMM composition remains largely unaddressed,
at least in a formal way,
and we believe our model and definitions capture enough of the
essential properties of AMMs to make useful progress.

We assume that the functions defining an AMM do not change over time.
In practice, an AMM's defining function can change over time.
For example, AMMs charge fees in a variety of ways,
usually adding those fees to the assets managed by the AMM.
Nevertheless, an AMM's defining function does not
change in the course of a single transaction,
the duration for which AMM composition is meaningful.
We will further discuss the effects of fees in \secref{fees}.

\subsection{Formal Axioms}
We are now able to restate the common-sense axioms of
\secref{commonsense} in more precise terms.

\begin{axiom}[Continuity]
For every AMM $A$, the function $A:\PosReals^n \to \mathbb{R}$
is twice-differentiable.
\end{axiom}
As far as we know,
existing AMMs use smooth (infinitely differentiable) functions.

\begin{axiom}[Convexity]
For every AMM $A$, $\upper(A)$ is strictly convex.
\end{axiom}

The following is a standard result from convex analysis.

\begin{lemma}
    \lemmalabel{epigraph}
    A function $f: \PosReals^n \to \PosReals$ is a strictly convex function if and only if its epigraph \\
    $\epi(f) = \set{(x,a) \in \PosReals^n \times \PosReals: f(x) \leq a}$ is a strictly convex set.
\end{lemma}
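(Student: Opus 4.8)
The plan is to prove the two implications separately. In each direction the real content is just the standard epigraph characterization of convexity; the only genuinely new ingredient is the \emph{strictness}, which forces me to track the \emph{interior} clause in the definition of a strictly convex set, namely that a proper convex combination of two distinct epigraph points must land in $\int(\epi(f))$ and not merely in $\epi(f)$.

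For the forward direction I would assume $f$ strictly convex and take two distinct points $(\bx,a),(\by,b) \in \epi(f)$ with $t \in (0,1)$. First I would record that $f$, being convex on the open domain $\PosReals^n$, is continuous, so $\set{(\bz,c) : f(\bz) < c}$ is open and contained in $\epi(f)$, hence sits inside $\int(\epi(f))$; this reduces the goal to proving the strict inequality $f(t\bx+(1-t)\by) < ta+(1-t)b$. When $\bx \neq \by$ this is immediate from strict convexity of $f$ together with $f(\bx) \leq a$ and $f(\by) \leq b$. When $\bx = \by$ (so necessarily $a \neq b$), strict convexity of $f$ says nothing, and I would instead note that $ta+(1-t)b$ is a strict convex combination of two unequal scalars, hence strictly exceeds $\min(a,b) \geq f(\bx)$.

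For the reverse direction I would assume $\epi(f)$ strictly convex and fix distinct $\bx,\by \in \PosReals^n$ and $t \in (0,1)$. The graph points $(\bx,f(\bx))$ and $(\by,f(\by))$ are distinct members of $\epi(f)$, so their combination lies in $\int(\epi(f))$; a shrinking-ball argument then shows any interior point $(\bz,c)$ satisfies $f(\bz) < c$ (since $(\bz,c-\varepsilon)\in\epi(f)$ for some small $\varepsilon$), which applied to the combination gives exactly the strict-convexity inequality for $f$.

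I expect the main obstacle to be nothing more than careful bookkeeping around the interior: the two bridges ``strictly above the graph $\Rightarrow$ interior'' (which needs continuity, free here since convex functions are continuous on open domains) and ``interior $\Rightarrow$ strictly above the graph'', plus the degenerate case $\bx = \by$ in the forward direction, which slips past strict convexity of $f$ and must be argued directly from the strictness of the scalar convex combination.
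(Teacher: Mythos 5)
Your proof is correct. Note that the paper does not actually prove this lemma at all --- it simply labels it ``a standard result from convex analysis'' and moves on --- so there is no authorial argument to compare against. Your write-up supplies exactly the two pieces that make the strict version nontrivial relative to the textbook (non-strict) epigraph characterization: the bridge between ``strictly above the graph'' and ``interior of $\epi(f)$'' (in one direction via continuity of $f$, which is indeed automatic for a convex function on the open domain $\PosReals^n$, and in the other via the shrinking-ball observation that an interior point $(\bz,c)$ admits $(\bz,c-\varepsilon)\in\epi(f)$), and the degenerate case $\bx=\by$, $a\neq b$ in the forward direction, where strict convexity of $f$ gives nothing and the strictness must instead come from the scalar combination $ta+(1-t)b>\min(a,b)\geq f(\bx)$. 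Both are handled correctly, so the argument is complete.
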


\begin{axiom}[Expressivity]
Every valuation has a unique stable point in $A$.
\end{axiom}

\begin{lemma}
  \lemmalabel{stable-upper}
For any AMM $A$,
every valuation $\bv \in \int(\Delta^n)$ has a stable point in $\upper(A)$.
\end{lemma}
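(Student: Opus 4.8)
The plan is to prove existence by a Weierstrass-style compactness argument: the target functional $\bv \cdot \bx$ is linear, hence continuous, and the strict positivity of $\bv$ lets me confine the minimization to a compact piece of the closed set $\upper(A)$. First I would pick any point $\bx_0$ on the state space $A(\bx) = 0$, which is nonempty and satisfies $A(\bx_0) = 0 \geq 0$, so $\bx_0 \in \upper(A)$ and $\upper(A) \neq \emptyset$. Setting $c_0 = \bv \cdot \bx_0$, I restrict attention to the sublevel set
\[
  K = \set{\bx \in \upper(A) \mid \bv \cdot \bx \leq c_0},
\]
which contains $\bx_0$ and contains every candidate minimizer.

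Next I would show $K$ is compact. Since $\bv \in \int(\Delta^n)$ has all coordinates $v_i > 0$, and every $\bx \in \upper(A) \subseteq \PosReals^n$ has all coordinates positive, each term $v_i x_i$ is at most the whole sum, giving $0 < x_i \leq (\bv \cdot \bx)/v_i \leq c_0/v_i$ for $\bx \in K$; thus $K$ is bounded. It is also closed, being the intersection of $\upper(A)$ (closed, by the AMM definition) with the closed half-space $\set{\bx \mid \bv \cdot \bx \leq c_0}$. A closed and bounded subset of $\Reals^n$ is compact, so the continuous function $\bv \cdot \bx$ attains a minimum on $K$ at some $\bx^\ast$; because every point outside $K$ has value exceeding $c_0 \geq \bv \cdot \bx^\ast$, this $\bx^\ast$ minimizes $\bv \cdot \bx$ over all of $\upper(A)$.

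Finally I would confirm $\bx^\ast$ is a genuine stable point by showing it lies on the level set $A(\bx) = 0$. If $A(\bx^\ast) > 0$, then since $A$ is strictly increasing in each coordinate I could decrease one coordinate slightly while preserving $A \geq 0$, strictly lowering $\bv \cdot \bx$ (as $v_i > 0$) and contradicting minimality; hence $\bx^\ast$ lies on the state space. Because the state space is contained in $\upper(A)$, the minimum of $\bv \cdot \bx$ over $\upper(A)$ equals its minimum over the state space, so $\bx^\ast$ is a stable point lying in $\upper(A)$.

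The step I expect to be the crux is the closedness of $\upper(A)$, and in particular the fact that it is closed as a subset of $\Reals^n$ rather than merely of $\PosReals^n$. This is exactly what forbids a minimizing sequence from escaping toward $\partial \PosReals^n$, where some $x_i \to 0$ while $A$ stays nonnegative only because the surviving coordinates diverge. It is this closedness, built into the AMM definition, that separates expressive AMMs from degenerate ones such as constant-sum markets, whose upper contour sets reach $\partial \PosReals^n$ and for which the stable point can fail to exist.
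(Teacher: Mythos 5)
Your proof is correct and takes essentially the same route as the paper's: both arguments minimize the continuous linear functional $\bv \cdot \bx$ over a compact sublevel set of the closed set $\upper(A)$ and invoke Weierstrass. You merely spell out two details the paper leaves implicit or defers --- the explicit bound $x_i \leq c_0/v_i$ establishing compactness, and the fact that the minimizer lies on the level set $A(\bx)=0$, which the paper handles separately in \lemmaref{stable-bdry}.
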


\begin{proof}
    Pick any $\bx$ in $\upper(A)$.
    The set 
    \begin{equation*}
        S = \set{\bx^{'} \in \NNegReals^n:\bv \cdot \bx^{'} \leq \bv \cdot \bx}
    \end{equation*}
    is compact.
    Since $\upper(A)$ is closed, $\tilde{S} = S \cap \upper(A)$ is compact.
    A stable point solves the optimization problem
    \begin{equation*}
        \min_{\bx^{'} \in \tilde{S}} \bv \cdot \bx^{'}.
    \end{equation*}
    This minimum exists since $\bv \cdot \bx^{'}$
    is a continuous function on the compact set $\tilde{S}$.
\end{proof}

\begin{lemma}
  \lemmalabel{stable-unique}
For any AMM $A$ and any valuation $\bv$,
the stable point for $\bv$ in $\upper(A)$ is unique.
\end{lemma}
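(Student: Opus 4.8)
The plan is to argue by contradiction, exploiting the strict convexity of $\upper(A)$ guaranteed by the Convexity axiom. Existence of a stable point is already settled by \lemmaref{stable-upper}, so I only need to rule out two distinct minimizers. Suppose $\bx_1, \bx_2 \in \upper(A)$ are distinct points both attaining the minimum value $m = \min_{\bx \in \upper(A)} \bv \cdot \bx$, so that $\bv \cdot \bx_1 = \bv \cdot \bx_2 = m$.

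The first step is to produce an \emph{interior} minimizer. Fix any $t \in (0,1)$ and set $\bx_t = t \bx_1 + (1-t)\bx_2$. Because $\upper(A)$ is strictly convex and $\bx_1 \neq \bx_2$, the paper's definition of a strictly convex set places $\bx_t$ in the interior of $\upper(A)$. Since the objective $\bv \cdot \bx$ is linear, it is constant along the segment: $\bv \cdot \bx_t = t(\bv \cdot \bx_1) + (1-t)(\bv \cdot \bx_2) = m$. Thus $\bx_t$ is itself a minimizer and lies in $\int(\upper(A))$.

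The second step derives a contradiction by descending from this interior point. Because $\bx_t$ is interior, there is an $\epsilon > 0$ with $\bx_t - \epsilon \bv \in \upper(A)$; this perturbed point still has strictly positive coordinates, so it remains in the domain $\PosReals^n$. All components of a valuation are positive, so $\|\bv\|_2^2 > 0$ and
\begin{equation*}
  \bv \cdot (\bx_t - \epsilon \bv) = m - \epsilon \|\bv\|_2^2 < m,
\end{equation*}
contradicting the minimality of $m$. Hence no two distinct minimizers can exist, and the stable point is unique.

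I expect the only delicate point to be the passage from ``strictly convex set'' to an interior convex combination, but this is handed to us directly by the paper's definition, so the argument reduces to the elementary observation that one can always strictly decrease a linear objective by stepping against its gradient from an interior feasible point. A minor bookkeeping check is that this step stays inside $\PosReals^n$, which follows because an interior point of $\upper(A)$ carries a full ball contained in $\upper(A) \subseteq \PosReals^n$.
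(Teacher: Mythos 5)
Your proof is correct and follows essentially the same route as the paper's: take a strict convex combination of two putative minimizers, use strict convexity of $\upper(A)$ to place it in the interior, and then perturb to strictly decrease the linear objective. The only cosmetic difference is that you step in the explicit direction $-\epsilon\bv$ (computing the decrease as $\epsilon\|\bv\|_2^2$), whereas the paper picks an arbitrary componentwise-smaller point in a small ball; both work because $\bv$ has all positive entries.
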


\begin{proof}
  Fix valuation $\bv \in \int(\Delta^{n})$ and
  let $\bx,\bx^{'} \in \upper(A)$ where $\bx \neq \bx^{'}$
  and $w = \bv \cdot \bx = \bv \cdot \bx^{'}$.
  Because $\upper(A)$ is strictly convex,
  for all $t \in (0,1)$,
  $\tilde{\bx} = t \bx + (1 - t)\bx^{'}$ is in $\int(\upper(A))$.
  Since $\int(\upper(A))$ is open,
  there is some $\epsilon > 0$ such that the open
  $\epsilon$-ball $B_{\tilde{\bx}}(\epsilon) \subset \int(\upper(A))$.
  Now choose $\bx^{*}$ in $B_{\tilde{\bx}}(\epsilon)$
  such that $\bx^{*} < \tilde{\bx}$.
  Then we have
  \begin{align*}
    \bv \cdot \bx^{*}
    &< \bv \cdot \tilde{\bx}\\
    &= t \bv \cdot \bx + (1 - t)\bv \cdot \bx^{'}\\
    &= t w + (1-t) w = w,
\end{align*}
  a contradiction.
\end{proof}

For example,
for the 2-dimensional constant-product AMM given by $(x,c/x)$,
the valuation $(v,1-v)$ has the unique stable point
\begin{equation*}
    \left(\sqrt{\frac{c(1-v)}{v}},\sqrt{\frac{v}{c(1-v)}}\right).
\end{equation*}
More generally,
for the 2-dimensional constant-product AMM given by $(x,f(x))$,
the valuation $(v,1-v)$ has the unique stable point
\begin{equation*}
\left(f^{\prime -1}\left(-\frac{v}{1-v}\right), f\left(f^{\prime -1}
\left(-\frac{v}{1-v}\right)\right)\right).
\end{equation*}
\begin{lemma}
  \lemmalabel{stable-bdry}
  For all valuations $\bv$,
  the stable point for $\bv$ in $\upper(A)$ lies on the level set $A$.
\end{lemma}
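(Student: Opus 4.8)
The plan is to argue by contradiction. By \lemmaref{stable-upper} and \lemmaref{stable-unique}, the valuation $\bv$ has a unique stable point $\bx^{*}$ in $\upper(A)$, that is, a unique minimizer of $\bv \cdot \bx$ over $\upper(A)$. Suppose toward a contradiction that $\bx^{*}$ does \emph{not} lie on the level set, so that $A(\bx^{*}) > 0$ and $\bx^{*}$ sits in the interior of the upper contour set. I would then exhibit a feasible point with strictly smaller dot product against $\bv$, contradicting minimality.

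First I would use continuity of $A$ (guaranteed by twice-differentiability): since $A(\bx^{*}) > 0$, there is some $\epsilon > 0$ such that $A > 0$ on the open ball $B_{\bx^{*}}(\epsilon)$, whence $B_{\bx^{*}}(\epsilon) \subseteq \upper(A)$. Because $\bx^{*} \in \PosReals^{n}$ has strictly positive coordinates, shrinking $\epsilon$ if necessary also keeps this ball inside the domain $\PosReals^{n}$.

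Next I would perturb $\bx^{*}$ by moving it uniformly toward the origin. For $\delta \in (0, \epsilon/\sqrt{n})$ small enough that $\bx^{*} - \delta \bone$ still has positive coordinates, the point $\bx^{'} = \bx^{*} - \delta \bone$ lies in $B_{\bx^{*}}(\epsilon)$ and hence in $\upper(A)$. Using $\bv \cdot \bone = \sum_{i} v_{i} = 1$, we get
\[
  \bv \cdot \bx^{'} = \bv \cdot \bx^{*} - \delta\,(\bv \cdot \bone) = \bv \cdot \bx^{*} - \delta < \bv \cdot \bx^{*},
\]
contradicting the minimality of $\bx^{*}$. Therefore $A(\bx^{*}) = 0$, i.e. the stable point lies on the level set $A$.

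The argument is essentially routine; the only point requiring care is ensuring the perturbed point $\bx^{'}$ stays simultaneously inside $\PosReals^{n}$ and inside $\upper(A)$, which is arranged by taking $\delta$ smaller than both $\epsilon/\sqrt{n}$ and the least coordinate of $\bx^{*}$. Conceptually, the reason no interior point can be stable is that strict monotonicity of $A$ places the whole interior of the upper contour set strictly ``above'' the level set in every coordinate direction, while strict positivity of $\bv$ means that lowering any coordinate strictly lowers the objective; together these force every minimizer onto the boundary.
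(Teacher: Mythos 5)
Your proof is correct and follows essentially the same route as the paper's: assume the stable point lies in the interior of $\upper(A)$, take a small open ball around it contained in $\upper(A)$, and pick a point in that ball that is componentwise strictly smaller to get a strictly smaller value of $\bv \cdot \bx$, contradicting minimality. Your version is just slightly more explicit (choosing $\bx^{*} - \delta\bone$ and checking it stays in $\PosReals^{n}$) where the paper simply selects some $\bx < \bx^{*}$ in the ball.
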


\begin{proof}
    Fix valuation $\bv$ and let $\bx^{*}$ be its stable point.
    Suppose that $\bx^{*} \not \in A$ but $\bx^{*} \in \int(\upper(A))$.
    As in \lemmaref{stable-unique},
    we can find $\epsilon > 0$ and an open ball $B_{\bx^{*}}(\epsilon)$ in $\int(\upper(A))$.
    Choosing $\bx \in B_{\bx^{*}}(\epsilon)$ such that $\bx< \bx^{*}$,
    we have $\bv \cdot \bx < \bv \cdot \bx^{*}$,
    a contradiction.
\end{proof}

\begin{corollary}
  Every AMM satisfies expressivity.
\end{corollary}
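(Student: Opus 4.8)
The plan is to assemble this corollary directly from the three preceding lemmas, since expressivity asks precisely for the existence and uniqueness of a stable point on the level set $A$. The one conceptual bridge I need is the containment $A \subseteq \upper(A)$: every point satisfying $A(\bx) = 0$ also satisfies $A(\bx) \geq 0$, so the level set sits inside the upper contour set at $0$. With this in hand, the task reduces to transferring the optimization problem between the two sets. First I would fix an arbitrary valuation $\bv \in \int(\Delta^n)$ and invoke \lemmaref{stable-upper} to obtain a point $\bx^{*} \in \upper(A)$ that minimizes $\bv \cdot \bx$ over $\upper(A)$; then \lemmaref{stable-bdry} guarantees that this minimizer in fact lies on the level set, so $\bx^{*} \in A$.

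Next I would verify that $\bx^{*}$ is a stable point in the exact sense of the definition, namely that it minimizes $\bv \cdot \bx$ over $A$ and not merely over the larger set $\upper(A)$. This is immediate from the containment: for any $\bx \in A \subseteq \upper(A)$, minimality of $\bx^{*}$ over $\upper(A)$ yields $\bv \cdot \bx^{*} \leq \bv \cdot \bx$, so $\bx^{*}$ is a minimizer over $A$ as well, establishing existence. For uniqueness I would argue contrapositively: if $\bx^{**} \in A$ were a second stable point with $\bv \cdot \bx^{**} = \bv \cdot \bx^{*}$, then since $\bx^{**} \in A \subseteq \upper(A)$ it would be another minimizer of $\bv \cdot \bx$ over $\upper(A)$, and \lemmaref{stable-unique} would force $\bx^{**} = \bx^{*}$.

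Because the corollary is a direct consequence of results already established, I do not expect a substantial obstacle; the only point requiring care is keeping the two notions of minimization straight, since stability is defined as minimization over the level set $A$ while \lemmaref{stable-upper} and \lemmaref{stable-unique} are phrased over $\upper(A)$. The containment $A \subseteq \upper(A)$ together with \lemmaref{stable-bdry} is exactly what reconciles them, so the argument is essentially bookkeeping once that observation is made explicit.
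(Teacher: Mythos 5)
Your argument is correct and is exactly the route the paper intends: the corollary is stated without proof precisely because it follows immediately from \lemmaref{stable-upper}, \lemmaref{stable-unique}, and \lemmaref{stable-bdry}, combined with the containment $A \subseteq \upper(A)$. Your explicit bookkeeping about minimizing over $A$ versus $\upper(A)$ fills in the step the paper leaves implicit, and it is sound.
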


\begin{axiom}[Stability]
Every $\bx \in A$ is the stable point for some valuation.
\end{axiom}

For example,
for the 2-dimensional constant-product AMM $(x,c/x)$,
the point $(x,c/x)$ is the stable point for the
valuation $(\frac{c}{c+x^2},1-\frac{c}{c+x^2})$.
More generally,
for the 2-dimensional constant-product AMM $(x,f(x))$,
the point $(x,f(x))$ is the stable point for the
valuation $(\frac{f'(x)}{f'(x)-1},\frac{1}{1-f'(x)})$.

\begin{lemma}
\lemmalabel{every-y-some-w}
  For every $\bx \in A$,
  there is some $\bw \in \NegReals^n$ such that $\bw \cdot \bx > \bw \cdot \by$ for all $\by \in \upper(A)$, $\by \neq \bx$. 
\end{lemma}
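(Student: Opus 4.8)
The plan is to produce the required $\bw$ explicitly by taking $\bw = -\nabla A(\bx)$ and then verifying the three things the statement demands: that $\bw$ has all-negative entries, that $\bx$ maximizes $\bw \cdot \by$ over $\upper(A)$, and that this maximum is strict. First I would record that since $A$ is strictly increasing in each coordinate, every partial derivative $\partial A / \partial x_i$ is positive, so $\nabla A(\bx) \in \PosReals^n$ and hence $\bw = -\nabla A(\bx) \in \NegReals^n$, as required.

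The heart of the argument is the supporting-hyperplane inequality
\[
  \nabla A(\bx) \cdot (\by - \bx) \geq 0 \qquad \text{for all } \by \in \upper(A),
\]
which is exactly $\bw \cdot \by \leq \bw \cdot \bx$. To prove it I would argue by contradiction: if some $\by \in \upper(A)$ had $\nabla A(\bx) \cdot (\by - \bx) < 0$, then the directional derivative of $A$ at $\bx$ along $\by - \bx$ is negative, so $A(\bx + t(\by - \bx)) < A(\bx) = 0$ for all sufficiently small $t > 0$. But both $\bx$ (since $A(\bx) = 0$) and $\by$ lie in $\upper(A)$, which is convex by the Convexity axiom, so the entire segment joining them lies in $\upper(A)$ and therefore satisfies $A \geq 0$ — a contradiction. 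This is just the standard first-order characterization of quasiconcavity specialized to $A$, whose upper contour sets are convex.

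Finally I would upgrade this weak inequality to a strict one. The displayed inequality says precisely that $\bx$ is a global minimizer of $\nabla A(\bx) \cdot \by$ over $\upper(A)$; equivalently, after normalizing $\bv = \nabla A(\bx)/\|\nabla A(\bx)\|_1 \in \int(\Delta^n)$, the point $\bx$ is a stable point for the valuation $\bv$. By \Lemmaref{stable-unique} this stable point is unique, so $\bx$ is the \emph{only} minimizer, and hence $\bw \cdot \by < \bw \cdot \bx$ for every $\by \in \upper(A)$ with $\by \neq \bx$, which is the claim.

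I expect the supporting-hyperplane step to be the main obstacle, since it is the only point at which the differential structure of $A$ must be tied to the convex geometry of $\upper(A)$; once it is in hand, the sign of $\bw$ follows immediately from monotonicity and the strictness follows immediately from \Lemmaref{stable-unique}.
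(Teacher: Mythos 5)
There is a genuine gap, and it sits exactly where you predicted the argument would be easiest: the claim that $\bw = -\nabla A(\bx)$ lies in $\NegReals^n$. Strict monotonicity of $A$ in each coordinate only gives $\partial A/\partial x_i \geq 0$, not $>0$ (compare $t \mapsto t^3$, which is strictly increasing with derivative $0$ at the origin). This is not a removable technicality under the paper's axioms: the function $A(x,y) = (y - f(x))^3$ with $f$ strictly convex, strictly decreasing, and smooth is twice-differentiable, strictly increasing in each coordinate, and has $\upper(A) = \epi(f)$ closed and strictly convex, so it is a legitimate AMM — yet $\nabla A \equiv \bzero$ on the entire level set. For such an $A$ your candidate $\bw$ is $\bzero$, which is not in $\NegReals^n$ and supports nothing; your quasiconcavity inequality degenerates to $0 \geq 0$; and your normalized $\bv$ is undefined. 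Even in the milder case where $\nabla A(\bx)$ is nonzero but has a single vanishing component, $\bw$ fails to be all-negative and an extra argument (essentially via strict convexity of $\upper(A)$) is needed to rule that out. The remaining two steps of your plan are fine: the first-order characterization of quasiconcavity correctly yields the weak supporting inequality whenever the gradient is nonzero, and invoking \lemmaref{stable-unique} to upgrade weak to strict is a clean alternative to arguing strictness directly from strict convexity.

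The paper's proof avoids the gradient entirely. It obtains a nonzero normal $\bw$ from the supporting hyperplane theorem applied to the convex set $\upper(A)$ — a purely convex-geometric statement needing no differentiability and immune to vanishing gradients — and then spends its effort showing that this abstract $\bw$ must have all strictly negative entries: if $\bw$ had all positive entries, $\bx + \epsilon\bw$ would violate the supporting inequality while lying in $\upper(A)$ by monotonicity; if $\bw$ had mixed signs, one constructs a strictly positive $\tilde{\bw}$ orthogonal to $\bw$ so that $\bx + \epsilon\tilde{\bw}$ lies in $\upper(A)$ yet sits on the supporting hyperplane, again a contradiction. To repair your proof you would either need to add the hypothesis $\nabla A(\bx) \neq \bzero$ (which the paper's axioms do not grant) or fall back on the paper's non-constructive route.
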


\begin{proof}
    Pick $\bx \in A$. 
    Since $\upper(A)$ is strictly convex,
    the supporting hyperplane theorem implies there is a non-zero
    $\bw \in \Reals^{n}$ such that
    $\bw \cdot \bx > \bw \cdot \by$ for all $\by \in upper(A)$, $\by \neq \bx$.
    We need to show that $\bw \in \NegReals^n$.
    Say $\bw \in \PosReals^n$.
    Choose any $\epsilon > 0$ and let $\tilde{\bx} = \bx + \epsilon \bw$
    so that $\bw \cdot \tilde{\bx} = \bw \cdot \bx + \epsilon\|\bw\|^2 > \bw \cdot \bx$.
    By monotonicity, $\tilde{\bx} \in \upper(A)$, a contradiction.
    Now consider the case where $\bw \in \Reals^n \setminus (\NegReals^n \cup \PosReals^n)$,
    namely $\bw$ cannot have all strictly positive or all strictly negative entries.
    We construct $\tilde{\bw}$ orthogonal to $\bw$.
    Replace all of the non-negative entries of $\bw$ in $\tilde{\bw}$
    with the sum of the absolute values of the negative coordinates.
    Replace all the negative entries of $\bw$ in $\tilde{\bw}$
    with the sum of all of the non-negative entries of $\bw$.
    These replacements guarantee that all coordinates of $\tilde{\bw}$ are positive,
    and $\tilde{\bw} \cdot \bw = 0$.
    Pick $\epsilon > 0$ and let
    $\tilde{\bx} = \bx + \epsilon \tilde{\bw}$
    so
    \begin{align*}
      \bw \cdot \tilde{\bx} 
      &= \bw \cdot \bx + \epsilon \bw \cdot \tilde{\bw} \\
      &= \bw \cdot \bx
    \end{align*}
and yet $\tilde{\bx} \in \upper(A)$ since $A$ is strictly increasing.
By contradiction, $\bw \in \NegReals^n$.
\end{proof}

\begin{lemma}
\lemmalabel{stable-for-some}
  For every $\bx \in A$,
  there exists a valuation $\bv$ for which $\bx$ is a stable point.
\end{lemma}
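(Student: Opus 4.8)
The plan is to leverage \lemmaref{every-y-some-w}, which already produces, for each state $\bx \in A$, a direction that is strictly maximized at $\bx$ over $\upper(A)$, and simply renormalize that direction into an admissible valuation. First I would apply \lemmaref{every-y-some-w} to the given $\bx \in A$ to obtain a vector $\bw \in \NegReals^n$ with $\bw \cdot \bx > \bw \cdot \by$ for every $\by \in \upper(A)$ with $\by \neq \bx$. Rewriting this as $(-\bw) \cdot \bx < (-\bw) \cdot \by$ shows that $\bx$ is the unique minimizer of the linear functional $\by \mapsto (-\bw) \cdot \by$ over $\upper(A)$.

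Next I would convert $-\bw$ into a genuine valuation. Since $\bw \in \NegReals^n$, the vector $-\bw$ lies in $\PosReals^n$, so setting $\bv = -\bw / \|\bw\|_1$ yields a vector all of whose entries are strictly positive and whose entries sum to $1$. Provided $n \geq 2$, each coordinate is then strictly less than $1$ as well, so $\bv \in \int(\Delta^n)$ satisfies the definition of a valuation, including the footnote convention that no coordinate equals $0$ or $1$.

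Finally I would observe that rescaling by the positive constant $1/\|\bw\|_1$ leaves the minimizer unchanged, so $\bx$ is still the unique minimizer of $\by \mapsto \bv \cdot \by$ over $\upper(A)$. Because the level set $A$ is contained in $\upper(A)$ and $\bx \in A$ by hypothesis, $\bx$ in particular minimizes $\bv \cdot \by$ over $A$, which is exactly the assertion that $\bx$ is a stable point for $\bv$.

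I do not expect a genuinely hard step here: the real work was already carried out in \lemmaref{every-y-some-w}, whose proof established that the supporting direction can be chosen with all entries strictly negative. The only points requiring care are the sign bookkeeping (turning a strict maximizer of $\bw$ into a strict minimizer of $-\bw$) and verifying that the normalization lands in the interior $\int(\Delta^n)$ rather than merely on its boundary; both are routine, and the strict negativity guaranteed by \lemmaref{every-y-some-w} is precisely what keeps $\bv$ in the open simplex.
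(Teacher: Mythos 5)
Your proposal is correct and matches the paper's own argument essentially verbatim: both invoke \lemmaref{every-y-some-w} to obtain $\bw \in \NegReals^n$, then negate and normalize by the $L^1$ norm to land in $\int(\Delta^n)$. Your added care about why the normalization stays in the open simplex is a slight elaboration of a step the paper leaves implicit, but the route is identical.
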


\begin{proof}
    Let $A$ be an $n$-dimensional AMM and let $\bx \in A$.
    Choose $\bw \in \NegReals^n$ as described in the previous lemma for $\bx$.
    We have $\bw \cdot \bx > \bw \cdot \by$ for $\by \in A$, $\by \neq \bx$.
    Negating $\bw$ and re-scaling the result so the elements sum to $1$ yields
    $\bv \in \int(\Delta^n)$.
    Thus we have $\bv \cdot \bx < \bv \cdot \by$ for all $\by \in A$ where $\by \neq \bx$,
    implying $\bx$ is a stable point for $\bv$.
\end{proof}

\begin{corollary}
  Every AMM satisfies stability.
\end{corollary}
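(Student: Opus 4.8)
The plan is to observe that this corollary is an immediate consequence of \lemmaref{stable-for-some}, since the stability axiom asks for precisely what that lemma delivers. Recall that the stability axiom requires every state $\bx \in A$ to be the stable point for some valuation, and that a stable point for $A$ and $\bv$ is defined as a point of $A$ minimizing the dot product $\bv \cdot \bx$. \Lemmaref{stable-for-some} produces, for each $\bx \in A$, a valuation $\bv \in \int(\Delta^n)$ with $\bv \cdot \bx < \bv \cdot \by$ for every $\by \in A$ with $\by \neq \bx$. This strict inequality says exactly that $\bx$ is the unique minimizer of $\bv \cdot \bx$ over the state space $A$, which is the definition of $\bx$ being a stable point for $\bv$.

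Concretely, I would first fix an arbitrary $\bx \in A$ and invoke \lemmaref{stable-for-some} to obtain the witnessing valuation $\bv$. I would then unpack the definition of \emph{stable point} to confirm that the relevant minimization is over the level set $A$ (the AMM's state space), so that the lemma's conclusion matches the axiom verbatim. Since $\bx$ was arbitrary, every state of $A$ is a stable point for some valuation, which is the stability axiom. This argument runs exactly parallel to the preceding corollary establishing expressivity, where \lemmaref{stable-bdry} played the analogous closing role.

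Because the substantive work has already been carried out in \lemmaref{every-y-some-w} (which extracts a strictly negative supporting normal via the supporting hyperplane theorem) and \lemmaref{stable-for-some} (which negates and renormalizes that normal into a valuation in $\int(\Delta^n)$), the only remaining task is this definitional bookkeeping, and there is no real obstacle at this stage. The one subtlety worth flagging is making sure that ``minimizes $\bv \cdot \bx$'' in the definition of a stable point is read as minimization over the state space $A$ rather than over $\upper(A)$; this distinction is harmless here, since \lemmaref{stable-bdry} guarantees the two notions coincide on the boundary, so the witnessing $\bv$ certifies $\bx$ as a stable point under either reading.
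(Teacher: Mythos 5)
Your proposal is correct and matches the paper's approach: the corollary is stated there as an immediate consequence of \lemmaref{stable-for-some} (itself built on \lemmaref{every-y-some-w}), with no further argument given. The definitional bookkeeping you describe, including the harmless $A$ versus $\upper(A)$ distinction, is exactly the intended reading.
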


In short, the goal of this paper is to balance axioms and composition operators so that the class of AMMs satisfying
these axioms remains closed under these composition operators.

\section{Topological Equivalence}
\seclabel{topology}
How many truly distinct AMMs of a given dimension are there?
Considered as a mathematical object,
much of an AMM's structure is captured by the link between valuations and their stable points.
We say that two AMMs are \emph{topologically equivalent}
if there is a stable-point preserving homeomorphism between their manifolds.
By itself, a homeomorphism between manifolds conveys little information,
but a homeomorphism that preserves stable points preserves the AMMs' common underlying structure.

In this section we show that \emph{all} AMMs over the same set of assets,
if they satisfy our axioms, are topologically equivalent.
More precisely,
for any two AMMs $A(x_1,\ldots,x_n)$ and $B(x_1,\ldots,x_n)$ over asset types
$X_1, \ldots, X_n$, satisfying our axioms,
there is a homeomorphism $\mu: A \to B$
such that $\bx$ and $\mu(\bx)$ are the stable states for the same valuation.
This proof relies on the uniqueness of stable points:
for example it would not hold if AMM functions were convex
instead of strictly convex.

Although topological equivalence implies a common mathematical structure,
two topologically equivalent AMMs may differ substantially
with respect to price slippage, fees,
or how expensive it is to move from one valuation's stable state to another's.

Recall from \lemmaref{stable-unique} and \lemmaref{stable-for-some}
that there is a unique function $\phi: \int(\Delta^n) \to A$
carrying each valuation to its unique stable point.

\begin{lemma}
\lemmalabel{stable-state-biject}
\sloppy
For $A$, an AMM,
the stable point map
\begin{equation*}
    \phi:~\int(\Delta^n)~\to~A
\end{equation*}
is a continuous bijection.
\end{lemma}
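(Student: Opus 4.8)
The plan is to verify the three ingredients of a continuous bijection---well-definedness into $A$, surjectivity, and injectivity---and then continuity, leaning on the existence and uniqueness results already proved. Well-definedness is immediate: \lemmaref{stable-upper} and \lemmaref{stable-unique} guarantee that each $\bv \in \int(\Delta^n)$ has exactly one stable point in $\upper(A)$, and \lemmaref{stable-bdry} places that point on the level set $A$, so $\phi$ genuinely maps $\int(\Delta^n)$ into $A$. Surjectivity is then just a restatement of \lemmaref{stable-for-some}: every $\bx \in A$ is the stable point of some $\bv \in \int(\Delta^n)$, hence lies in the image of $\phi$.

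For injectivity, I would show that the valuation is recovered from its stable point through the gradient of $A$. Suppose $\bx = \phi(\bv) = \phi(\bv')$. Since $\bx$ minimizes $\bv \cdot \bx'$ over $\upper(A) = \set{\bx' : A(\bx') \geq 0}$ and the only constraint active at a boundary point is $A(\bx) = 0$, the first-order conditions force $\bv$ to be parallel to $\nabla A(\bx)$; because $\bv$ has strictly positive coordinates and $A$ is strictly increasing (so $\nabla A(\bx) \in \PosReals^n$), the proportionality constant is positive. Thus $\bv$ is the rescaling of $\nabla A(\bx)$ whose coordinates sum to one, which depends on $\bx$ alone, and the same holds for $\bv'$, giving $\bv = \bv'$. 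Equivalently, this is the observation that the supporting direction $\bw$ of \lemmaref{every-y-some-w} is unique because differentiability of $A$ makes the supporting hyperplane at the smooth boundary point $\bx$ the unique tangent plane, so its normal---and hence the valuation---is pinned down by $\bx$.

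Continuity is the step I expect to be the main obstacle, because $\upper(A)$ is unbounded and $\phi$ is an $\argmin$ over this noncompact set, so I would first restore compactness with a local bound. Fix $\bv_0$ and a neighborhood $N \subseteq \int(\Delta^n)$ on which every coordinate is bounded below by some $\delta > 0$. Choosing any fixed reference point $\bx_0 \in \upper(A)$, minimality of $\bx^* = \phi(\bv)$ gives $\bv \cdot \bx^* \leq \bv \cdot \bx_0 \leq M$ for all $\bv \in N$ (with $M$ independent of $\bv$ since $\sum_i v_i = 1$), while $\bx^* \in \NNegReals^n$ together with $v_i \geq \delta$ yields $\delta \|\bx^*\|_1 \leq \bv \cdot \bx^* \leq M$. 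Hence every minimizer for $\bv \in N$ lies in the compact set $K = \set{\bx' \in \NNegReals^n : \|\bx'\|_1 \leq M/\delta} \cap \upper(A)$.

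With this bound in hand, I would finish by the standard ``convergent subsequence plus uniqueness'' argument. Take any sequence $\bv_k \to \bv_0$ in $N$ and set $\bx_k = \phi(\bv_k) \in K$. Any convergent subsequence $\bx_{k_j} \to \bx^*$ has its limit in the closed set $\upper(A)$, and passing to the limit in $\bv_{k_j} \cdot \bx_{k_j} \leq \bv_{k_j} \cdot \by$, valid for every $\by \in \upper(A)$, gives $\bv_0 \cdot \bx^* \leq \bv_0 \cdot \by$, so $\bx^*$ is a stable point for $\bv_0$; by \lemmaref{stable-unique} it equals $\phi(\bv_0)$. Since $\set{\bx_k}$ lives in the compact set $K$ and each of its convergent subsequences has the same limit $\phi(\bv_0)$, the whole sequence converges to $\phi(\bv_0)$, which establishes sequential---hence full---continuity of $\phi$ on $N$, and therefore everywhere.
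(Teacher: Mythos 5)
Your proof is correct, and while it follows the same skeleton as the paper's (surjectivity from \lemmaref{stable-for-some}, injectivity, then sequential continuity), both of the delicate steps are handled by genuinely different--and more careful--arguments. For injectivity, the paper simply cites \lemmaref{stable-unique}, which strictly speaking only gives well-definedness of $\phi$ (one stable point per valuation), not injectivity (one valuation per stable point); your gradient argument, recovering $\bv = \nabla A(\bx)/\|\nabla A(\bx)\|_1$ from the first-order conditions at $\bx$, is the correct fix, and it is essentially the computation the paper defers to the proof of \lemmaref{stable-state-homeo} for the continuity of $\phi^{-1}$. For continuity, the paper writes ``let $\tilde{\bx} = \lim_{n\to\infty}\phi(\bv_n)$'' without justifying that this limit exists, and then derives a contradiction from strict convexity via the midpoint $\overline{\bx} = (\bx^* + \tilde{\bx})/2$; your version supplies the missing compactness (the uniform bound $\|\phi(\bv)\|_1 \leq M/\delta$ on a neighborhood of $\bv_0$, using $v_i \geq \delta$ and $\bv\cdot\phi(\bv) \leq \bv\cdot\bx_0$), extracts convergent subsequences, and identifies every subsequential limit with $\phi(\bv_0)$ by passing to the limit in the defining inequality $\bv_k\cdot\bx_k \leq \bv_k\cdot\by$ and invoking uniqueness. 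What your approach buys is rigor: it closes the existence-of-limit gap and does not even need the strict-convexity midpoint trick for continuity (strict convexity enters only through uniqueness of the stable point). What the paper's approach buys is brevity, at the cost of the two gaps noted above.
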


\begin{proof}
The map $\phi$ is surjective by \lemmaref{stable-for-some},
and injective by \lemmaref{stable-unique}.
To show continuity,
consider the sequence
$\set{\bv_n}_{n=1}^{\infty} \subset \int(\Delta^n)$ where $\lim_{n \to \infty}\bv_n = \bv$.
Let $\tilde{\bx} = \lim_{n \to \infty} \phi(\bv_n)$ and $\bx^{*} = \phi(\bv)$.
  Suppose $\tilde{\bx} \neq \bx^{*}$.
  Note that $\bv \cdot \bx^{*} < \bv \cdot \tilde{\bx}$ by definition of stable point.
  Letting $\overline{\bx} = \frac{\bx^{*} + \tilde{\bx}}{2}$ by strict convexity we know $\overline{\bx} \in \int(\upper(A))$.
  We also have that $\bv \cdot \bx^{*} < \bv \cdot \overline{\bx} < \bv \cdot \tilde{\bx}$.
  Notice now that $\bv_n \cdot \overline{\bx} > \bv_n \cdot \phi(\bv_n)$ by definition so taking limits we get
  $\bv \cdot \overline{\bx} > \bv \cdot \tilde{\bx}$, a contradiction.
  Thus $\lim_{n \to \infty} \phi(\bv_n) = \phi(\bv)$.
\end{proof}

\begin{lemma}
\lemmalabel{stable-state-homeo}
For $A$, an AMM,
the stable point map
\begin{equation*}
    \phi:~\int(\Delta^n)~\to~A
\end{equation*}
is a homeomorphism.
\end{lemma}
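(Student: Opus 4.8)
The plan is to leverage what is already in hand: by \lemmaref{stable-state-biject} the map $\phi$ is a continuous bijection, so the only thing left to establish is that its inverse is continuous. Rather than appeal to an abstract topological fact, I would produce the inverse explicitly, since the paper has already exhibited it in the two-dimensional examples (the valuation $(\frac{f'(x)}{f'(x)-1},\frac{1}{1-f'(x)})$ is exactly a normalized gradient). The guiding observation is that the inverse $\phi^{-1}$ sends a stable point $\bx \in A$ to the valuation $\bv$ that makes $\bx$ stable, and this valuation is determined by the direction normal to $A$ at $\bx$.

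Concretely, I would argue as follows. Fix $\bx \in A$. By \lemmaref{stable-bdry} the stable point minimizing $\bv \cdot \bx'$ over $\upper(A)$ lies on the level set $A(\bx')=0$, so $\bx$ solves a minimization of the linear objective $\bv \cdot \bx'$ subject to the smooth constraint $A(\bx')=0$. The first-order (Lagrange) optimality condition then forces $\bv = \lambda \nabla A(\bx)$ for some scalar $\lambda$. Since $A$ is strictly increasing in each coordinate, $\nabla A(\bx) \in \PosReals^n$ and in particular is nonzero; because $\bv \in \int(\Delta^n)$ has positive entries summing to $1$, we get $\lambda > 0$ and hence the clean formula $\phi^{-1}(\bx) = \nabla A(\bx)/\|\nabla A(\bx)\|_1$. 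This map is continuous: the continuity axiom makes $A$ twice-differentiable, so $\nabla A$ is continuous, and since $\nabla A(\bx)$ never vanishes the normalization by $\|\cdot\|_1$ is continuous as well. Therefore $\phi^{-1}$ is continuous, and combined with \lemmaref{stable-state-biject} this makes $\phi$ a homeomorphism.

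The step requiring the most care is justifying the first-order condition rigorously, i.e. identifying the supporting direction $\bw$ from \lemmaref{every-y-some-w} with $-\nabla A(\bx)$. Because $\upper(A)$ is strictly convex with a $C^2$ boundary, the supporting hyperplane at $\bx$ is the tangent hyperplane, whose outward normal is precisely $\nabla A(\bx)$; one must check that the smoothness and strict monotonicity genuinely pin down this direction (and its sign) rather than merely bounding it. If I wanted to avoid the gradient computation entirely, an alternative is to note that both $\int(\Delta^n)$ and the level set $A$ are $(n-1)$-dimensional topological manifolds, so that a continuous injection between them is automatically open by invariance of domain applied in local charts, yielding continuity of $\phi^{-1}$ for free; I would, however, favor the explicit gradient argument because it is constructive and dovetails with the worked examples already in the text.
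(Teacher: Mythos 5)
Your proposal is correct and follows essentially the same route as the paper: reduce to continuity of $\phi^{-1}$ via \lemmaref{stable-state-biject}, derive $\bv = \lambda \nabla A(\bx)$ from the first-order conditions, and conclude $\phi^{-1}(\bx) = \nabla A(\bx)/\|\nabla A(\bx)\|_1$ is continuous by smoothness of $A$. Your added remarks on pinning down the supporting direction and the invariance-of-domain alternative go slightly beyond the paper's write-up but do not change the argument.
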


\begin{proof}
From \lemmaref{stable-state-biject}, $\phi$ is both bijective and continuous,
so it is enough to show $\phi^{-1}$ is continuous.
For any $\bv$ with stable point $\bx$,
the first-order conditions imply that $\bv = \lambda \nabla A(\bx)$
for some non-zero (Lagrange multiplier) $\lambda \in \Reals$.
Since $A$ is strictly increasing, $\lambda > 0$.
Thus we can think of $\bv$ as a function of $\bx$,
written $\bv(\bx) = \lambda(\bx) \nabla A(\bx)$.
Because $A$ is continuously differentiable, $\nabla A(\bx)$ is continuous,
so it is enough to check $\lambda(\bx)$ is continuous.
Because $\bv(\bx)$ is a convex combination, and $\lambda(\bx)$ is unique,
\begin{equation*}
\lambda(\bx) = \frac{1}{\|\nabla A(\bx)\|_1},
\end{equation*}
which is continuous because each $\frac{\partial A(\bx)}{\partial x_i}$ is continuous.
It follows that $\phi^{-1}$ is continuous.
\end{proof}

\begin{theorem}
Let $A$ and $B$ be AMMs over the same set of assets.
There is a homeomorphism $\mu: A \to B$
that preserves stable points:
if $\bx$ is the stable point for valuation $\bv$ in $A$,
then $\mu(\bx)$ is the stable point for $\bv$ in $B$.
\end{theorem}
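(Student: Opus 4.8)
The plan is to route both AMMs through their common parameter space, the open simplex $\int(\Delta^n)$, using the stable-point maps supplied by \lemmaref{stable-state-homeo}. Write $\phi_A: \int(\Delta^n) \to A$ and $\phi_B: \int(\Delta^n) \to B$ for the respective stable-point homeomorphisms, each carrying a valuation to its unique stable point. Since both share the same domain, I would define
\begin{equation*}
  \mu = \phi_B \circ \phi_A^{-1}: A \to B.
\end{equation*}

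Being a composition of homeomorphisms, $\mu$ is itself a homeomorphism, with inverse $\phi_A \circ \phi_B^{-1}$. The only remaining point is stable-point preservation, which I would verify by tracing a valuation through the definitions. If $\bx \in A$ is the stable point for $\bv$, then $\phi_A^{-1}(\bx) = \bv$ by injectivity (\lemmaref{stable-unique}), so
\begin{equation*}
  \mu(\bx) = \phi_B(\phi_A^{-1}(\bx)) = \phi_B(\bv),
\end{equation*}
which is by definition the stable point for $\bv$ in $B$. Hence $\mu$ carries the stable point of each valuation in $A$ to the stable point of the \emph{same} valuation in $B$, as required.

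Realistically, all of the genuine difficulty has already been absorbed into \lemmaref{stable-state-homeo}: establishing that each $\phi$ is a continuous bijection whose inverse is also continuous. Given that, the theorem is a formal consequence of composing two homeomorphisms over a shared parameter domain, so I do not expect any real obstacle at this stage. The substantive work was showing the stable-point map is a homeomorphism, which the preceding lemmas accomplish via the first-order (Lagrange) characterization $\bv(\bx) = \lambda(\bx)\nabla A(\bx)$ together with the explicit continuous formula for $\lambda(\bx)$; once that is in hand, the present statement is essentially immediate.
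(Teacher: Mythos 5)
Your proposal is correct and matches the paper's proof essentially verbatim: both define $\mu$ as the composite $\phi_B \circ \phi_A^{-1}$ of the two stable-point homeomorphisms from \lemmaref{stable-state-homeo} and verify stable-point preservation by tracing a valuation through the maps. You are also right that all the substantive work lives in the preceding lemmas.
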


\begin{proof}
  By \lemmaref{stable-state-homeo},
  there exist homeomorphisms
  \begin{align*}
  \phi: \int(\Delta^n) &\rightarrow A,\\
  \phi': \int(\Delta^n) & \rightarrow B
  \end{align*}
  Their composition $\mu = \phi' \circ \phi^{-1}$ is also a homeomorphism.
  For $\bv \in \int(\Delta^n)$ with stable points $\bx \in A, \bx' \in B$,
  $\phi(\bv) = \bx$ and $\phi^{'}(\bv) = \bx^{'}$,
  so $\mu(\bx) = \phi^{'}(\phi^{-1}(\bx)) = \phi^{'}(\bv) = \bx^{'}$,
  implying that $\mu$ preserves stable points.
\end{proof}
Note that this result requires that $A$ have continuous first derivatives.

\section{Operators}
\seclabel{operators}
It is useful to be able to reduce an AMM's dimension,
perhaps by ignoring some assets,
or by creating ``baskets'' of distinct assets that can
be treated as a unit.
In this section we introduce two tools for reducing dimensionality:
projection, and asset virtualization.

\subsection{Projection}
An AMM may provide the ability to trade across a variety of asset types,
but traders may choose to restrict their attention to a subset,
ignoring the rest.
Perhaps the ignored assets are too volatile, or not volatile enough,
or there are regulatory barriers to owning them.

Mathematically,
the \emph{projection} operator acts on an AMM by fixing some
state coordinates to constant values and letting the rest vary.
We will show that projecting an AMM in this way yields another AMM
of lower dimension.
Informally, traders are free to ignore uninteresting assets.
\begin{definition}
  Let $\bx = (x_1, \ldots, x_n)$, $\by = (y_1, \ldots, y_m)$,
  and a constant $\ba = (a_1, \ldots, a_n)$.
  The projection of $A$ onto $\ba$ is given by
  $A_{\ba}(\by) = A(\ba, \by) = 0$
\end{definition}
\begin{lemma}
  Given an $(n+m)$-dimensional AMM $A(\bx,\by) = 0$ and $\ba \in \PosReals^n$,
  the projection $A_{\ba}(\by)$ is an $m$-dimensional AMM.
\end{lemma}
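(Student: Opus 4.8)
The plan is to verify directly that $A_{\ba}$ meets the three defining conditions of an $m$-dimensional AMM: it is twice-differentiable, strictly increasing in each coordinate, and each of its upper contour sets $\set{\by \in \PosReals^m : A_{\ba}(\by) \geq c}$ (for $c \geq 0$) is closed and strictly convex. The first three of these are inherited almost for free from $A$; the strict convexity of the sliced upper contour set is the only step that needs a genuine argument.

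First I would dispatch the easy conditions. Writing $\iota : \PosReals^m \to \PosReals^{n+m}$ for the affine embedding $\iota(\by) = (\ba, \by)$, we have $A_{\ba} = A \circ \iota$, so $A_{\ba}$ is twice-differentiable because $A$ is and $\iota$ is affine. For monotonicity, the $j$-th coordinate of $\by$ is the $(n+j)$-th coordinate of $\iota(\by)$, so $A_{\ba}$ is strictly increasing in $y_j$ precisely because $A$ is strictly increasing in $x_{n+j}$. Closedness is immediate from continuity of $A_{\ba}$: the set $\set{\by : A_{\ba}(\by) \geq c}$ is the preimage of the closed ray $[c, \infty)$.

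The heart of the matter is strict convexity, and the key observation is that the upper contour set of $A_{\ba}$ at level $c$ is exactly the slice of $U_c = \set{(\bx,\by) \in \PosReals^{n+m} : A(\bx,\by) \geq c}$ by the affine subspace $H = \set{\ba} \times \Reals^m$, identified with a subset of $\Reals^m$ through $\iota$. So what I really need is the general fact that a slice of a strictly convex set by an affine subspace is again strictly convex. To prove it, take distinct $\by, \by'$ in the slice and $t \in (0,1)$ (if the slice has fewer than two points there is nothing to check). Then $(\ba, \by)$ and $(\ba, \by')$ are distinct points of the strictly convex set $U_c$, so $(\ba, t\by + (1-t)\by') = t(\ba,\by) + (1-t)(\ba,\by')$ lies in the interior of $U_c$ as a subset of $\Reals^{n+m}$.

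The subtle point, and the step I expect to be the main obstacle, is converting this ambient interior membership into relative interior membership inside $H \cong \Reals^m$, which is what strict convexity of the slice actually demands. I would do this by choosing an ambient open ball $B \subset U_c$ around the point $p = (\ba, t\by + (1-t)\by')$ and intersecting it with $H$: since $p \in H$, the intersection $B \cap H$ is a relatively open $m$-dimensional ball around $p$, and it is contained in $U_c \cap H$, that is, in the slice. Transporting back through $\iota$, the point $t\by + (1-t)\by'$ therefore lies in the $\Reals^m$-interior of the upper contour set of $A_{\ba}$. This establishes strict convexity and completes the verification that $A_{\ba}$ is an $m$-dimensional AMM.
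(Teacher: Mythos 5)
Your proposal is correct and follows essentially the same route as the paper: verify directly that $A_{\ba}$ inherits twice-differentiability, strict monotonicity, and strict convexity of the upper contour set from $A$. The only real difference is that you handle the strict-convexity step more carefully than the paper does --- the paper simply asserts $A(t(\ba,\bx)+(1-t)(\ba,\bx')) > 0$ ``by strict convexity of $\upper(A)$,'' whereas you explicitly convert ambient interior membership in $\upper(A)$ into relative interior membership in the slice $\{\ba\}\times\PosReals^m$ by intersecting an open ball with the affine subspace, which is the honest way to match the paper's set-theoretic definition of strict convexity.
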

\begin{proof}
  It is enough to check that $A_{\ba}$ is twice-differentiable,
  strictly increasing, and $\upper(A)$ is strictly convex.
  Because $A(\bx,\by)$ is twice-differentiable,
  so is $A(\ba,\by) = A_{\ba}(\by)$.
  To show that $A_{\ba}$ is strictly increasing,
  let $\bx' \gneqq \bx$.
  \begin{align*}
    A_{\ba}(\bx)
    &= A(\ba,\bx)\\
    &< A(\ba,\bx')\\
    &= A_{\ba}(\bx').
  \end{align*}
  To show that $\upper(A_{\ba})$ is strictly convex,
  pick distinct $\bx$ and $\bx'$ in $\upper(A_{\ba})$.
  Namely $A(\ba,\bx) = A_{\ba}(\bx) \geq 0$ and $ A(\ba,\bx^{'}) = A_{\ba}(\bx') \geq 0$.
  For $t \in (0,1)$:
  \begin{align*}
    A_{\ba}(t \bx + (1-t) \bx')
    &= A(t \ba + (1-t) \ba, t \bx + (1-t) \bx')\\
    &= A(t(\ba,\bx) + (1-t)(\ba,\bx^{'})) > 0
  \end{align*}
  by the strict convexity of $\upper(A)$.
\end{proof}

\begin{lemma}
For index set $I$,
let $\bv = (v_i | i \in I)$ be a valuation for
a sequence of asset types $X = (X_i | i \in I)$.
For $J \subset I$,
$\bv' = (\frac{v_j}{1 - \sum_{k \not \in J}v_k} | j \in J)$
is a valuation for $X' = (X_j | j \in J) \subset X$.
\end{lemma}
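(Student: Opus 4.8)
The plan is to verify directly the two defining conditions of a valuation for $\bv'$: that each coordinate lies strictly in $(0,1)$, and that the coordinates sum to $1$, so that $\bv'$ is a point of $\int(\Delta^{|J|})$, i.e., a valuation for $X'$.

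First I would abbreviate $S = \sum_{k \not\in J} v_k$, the total weight $\bv$ assigns to the discarded assets, and record the key identity
\begin{equation*}
  1 - S = \sum_{i \in I} v_i - \sum_{k \not\in J} v_k = \sum_{j \in J} v_j,
\end{equation*}
which holds because $\bv$ is a valuation, so $\sum_{i \in I} v_i = 1$. Since every $v_j > 0$ and $J$ is nonempty, this sum is strictly positive; hence $1 - S > 0$ and the renormalizing denominator appearing in the definition of $\bv'$ is well-defined and positive.

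Next I would dispatch positivity and the sum condition. For each $j \in J$, the numerator $v_j$ and the denominator $1 - S$ are both positive, so $v'_j = v_j / (1 - S) > 0$. The sum-to-one condition is then a one-line calculation using the identity above:
\begin{equation*}
  \sum_{j \in J} v'_j = \frac{1}{1 - S} \sum_{j \in J} v_j = \frac{1 - S}{1 - S} = 1.
\end{equation*}
The strict upper bound $v'_j < 1$ then follows for free: since the coordinates are positive and sum to $1$, each is strictly less than $1$ (using $|J| \geq 2$, the non-degenerate case; if $|J| = 1$ the single coordinate renormalizes to $1$, the degenerate valuation already excluded by the paper's convention).

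I expect no genuinely hard step: the argument is a routine renormalization of $\bv$ restricted to the surviving coordinates. The only places demanding care are confirming that $1 - S > 0$ so that $\bv'$ is defined at all --- which rests on $J$ being nonempty with every $v_j$ positive --- and separating out the degenerate single-asset case.
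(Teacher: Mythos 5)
Your proof is correct; the paper states this lemma without proof (treating the renormalization as routine), and your direct verification --- positivity, the identity $1-\sum_{k\not\in J}v_k=\sum_{j\in J}v_j$, and the sum-to-one computation --- is exactly the argument the authors are implicitly relying on. Your observation that the case $|J|=1$ degenerates (the single coordinate renormalizes to $1$, which the paper's convention excludes) is a legitimate edge case the paper glosses over, and worth flagging.
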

We say that $X'$ \emph{inherits} $\bv'$ from valuation $\bv$ of $X$.

The next lemma states that stable points persist under projection.
\begin{lemma}
\lemmalabel{project-stable}
Let $A(\bx,\by) = 0$ be an $(n+m)$-dimensional AMM,
$\bv \in \int(\Delta^{n+m})$ a valuation on $(\bx,\by)$.
If $(\ba,\bb)$ is the stable point for $\bv$ in $A(\bx,\by)$
then $\bb$ is the stable point for the inherited valuation $\bv^{'}$.
\end{lemma}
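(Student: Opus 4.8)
The plan is to exploit a simple principle: a global minimizer, restricted to any slice of the domain that still contains it, remains a minimizer on that slice. Write the valuation in block form as $\bv = (\bv_1, \bv_2)$, where $\bv_1 \in \Reals^n$ collects the coordinates over $\bx$ and $\bv_2 \in \Reals^m$ those over $\by$, and set $s = \sum_i (\bv_1)_i = \|\bv_1\|_1$. By the inherited-valuation lemma the inherited valuation is $\bv' = \bv_2 / (1 - s)$, and since $\bv \in \int(\Delta^{n+m})$ we have $1 - s = \|\bv_2\|_1 > 0$; thus $\bv'$ is a \emph{strictly positive} scalar multiple of the $\by$-block $\bv_2$, so minimizing $\bv' \cdot \by$ and minimizing $\bv_2 \cdot \by$ have the same argmin.

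First I would verify that $\bb$ is a legitimate state of the projected AMM. Because $(\ba,\bb)$ is the stable point for $\bv$ in $A$, \lemmaref{stable-bdry} places it on the level set, so $A(\ba,\bb) = 0$, and therefore $A_{\ba}(\bb) = A(\ba,\bb) = 0$, i.e.\ $\bb \in A_{\ba}$. The projection lemma guarantees $A_{\ba}$ is an $m$-dimensional AMM, so by \lemmaref{stable-upper} and \lemmaref{stable-unique} it has a unique stable point for $\bv'$, namely the unique minimizer of $\bv' \cdot \by$ over $\upper(A_{\ba})$. It therefore suffices to show that $\bb$ achieves this minimum.

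The core step is a one-line contradiction. Suppose some $\by^{*} \in \upper(A_{\ba})$ satisfied $\bv' \cdot \by^{*} < \bv' \cdot \bb$. Cancelling the positive factor $1/(1-s)$ gives $\bv_2 \cdot \by^{*} < \bv_2 \cdot \bb$, and adding the common term $\bv_1 \cdot \ba$ to both sides yields
\begin{equation*}
\bv \cdot (\ba, \by^{*}) < \bv \cdot (\ba, \bb).
\end{equation*}
But $A(\ba, \by^{*}) \geq 0$ means $(\ba, \by^{*}) \in \upper(A)$, contradicting that $(\ba,\bb)$ is the stable point for $\bv$ in $A$, i.e.\ the minimizer of $\bv \cdot (\,\cdot\,)$ over $\upper(A)$. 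Hence $\bb$ minimizes $\bv' \cdot \by$ over $\upper(A_{\ba})$ and is the (unique, by \lemmaref{stable-unique}) stable point of $\bv'$.

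There is no deep obstacle here; the only care required is bookkeeping. I must confirm that $\bv_1 \cdot \ba$ is genuinely constant across the comparison — it is, precisely because projection freezes $\bx$ at $\ba$ — and that passing from $\bv_2$ to the normalized valuation $\bv'$ is a strictly positive rescaling, so the argmin is preserved rather than reversed. The role of \lemmaref{stable-bdry} is to land $\bb$ on the level set rather than merely in the interior of $\upper(A)$, which is what certifies $\bb$ as an actual state of the projected AMM $A_{\ba}$ and lets the slice argument close.
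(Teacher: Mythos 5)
Your proposal is correct and follows essentially the same route as the paper: assume some point of the projected AMM beats $\bb$ for $\bv'$, undo the $1/(1-s)$ normalization, add the constant term $\bv_1 \cdot \ba$, and contradict the stability of $(\ba,\bb)$ for $\bv$ in $A$. Your version is slightly more careful than the paper's in two minor respects — you explicitly certify via \lemmaref{stable-bdry} that $\bb$ lies on $A_{\ba}$, and you preserve the strict inequality throughout (the paper's proof weakens $<$ to $\leq$ after rescaling, which is harmless but unnecessary) — but these are polish, not a different argument.
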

\begin{proof}
  Suppose the stable point for $\bv^{'}$ is $\bb' \neq \bb$, namely $\bv^{'} \cdot \bb^{'} < \bv^{'} \cdot \bb$.
  Scaling both sides by $1 - \sum_{i = 1}^{n} v_i$ yields
  $(v_{n+1},\ldots,v_{n+m}) \cdot \bb' \leq (v_{n+1},\ldots,v_{n+m}) \cdot \bb$.
  Adding $(v_1, \ldots, v_n) \cdot \ba$ to both sides yields
  $v \cdot (\ba,\bb') \leq v \cdot (\ba,\bb)$,
  contradicting the assumption that $(\ba,\bb)$ is the stable point for $\bv$.
\end{proof}

\subsection{Virtualization}
\seclabel{virtualization}
It is sometimes convenient to create a ``virtual asset''
from a linear combination of assets.
Here we show that replacing a set of assets
traded by an AMM with a single virtual asset
is also an AMM.
This construction works for any linear combination,
although the most sensible combination
is usually the assets' current market valuation.

Here is a simple example of asset virtualization.
Consider an AMM that trades across three asset types,
$X,Y,Z$, defined by the constant-product formula 
\begin{equation*}
    A(x,y,z) = x y z - 8 = 0,
\end{equation*}
initialized in state $(2,2,2)$.
A trader believes that 2 units of $Y$ are always worth 1 unit of $Z$,
and that it makes sense to link them in that ratio
by creating a virtual asset $W$
worth 2/3 units of $Y$ and 1/3 unit of $Z$,
and to trade in a single denomination of $W$ instead of
individual denominations of $Y$ and $Z$.

Formally, the trader defines $W$ in terms of the valuation
$\bv = (\frac{2}{3},\frac{1}{3})$ on $Y,Z$.
The virtualized AMM $A|\bv$ is defined by
\begin{align*}
  (A|\bv)(x,w) 
  &= A(x,\frac{2w}{3},\frac{w}{3}+1)\\
  &= x \frac{2w}{3}(\frac{w}{3}+1) - 8 \\
  &= 0,
\end{align*}
with initial state $(2,3)$.
(The ``+1'' in the $Z$ co-ordinate appears
because 2 $Y$ and 2 $Z$ units are not evenly divisible
into $W$ units.)

Let $\bx = (x_1, \ldots, x_n)$ and $\by = (y_1, \ldots, y_m)$.
Let $A(\bx,\by) := A(x_1, \ldots, x_n,y_1,\ldots,y_m) = 0$ be an
$(n+m)$-dimensional AMM with initial state
$(\ba,\bb) = (a_1, \ldots, a_n, b_1, \ldots, b_m)$.
Let us create a virtual asset $Z$ from
$y_1, \ldots, y_m$,
using the valuation $\bv = (v_1, \ldots, v_m)$.

Let $c \in \PosReals$ be the largest value such that
$\bb - c \bv \geq \bzero$.
The value $c$ is the number of $Z$ assets in $\bb$,
and $\br =\bb - c \bv$ is the vector of residues
if $\bb$ is not evenly divisible into $Z$ units.
The virtualized AMM is given by
\begin{align*}
(A|\bv)(\bx,z) 
&= A(\bx, v_1 z + r_1, \ldots, v_m z + r_m) \\
&= 0,
\end{align*}
with initial state $(a_1, \ldots, a_n, c)$.

The next lemma says that in any AMM state,
it is always possible to virtualize any set of assets.
\begin{lemma}
  Let $A$ be an $(m+n)$-dimensional AMM in state $(\ba,\bb)$,
  where $\ba \in \PosReals^m$, $\bb \in \PosReals^n$,
  and valuation $\bv \in \int(\Delta^n)$.
  We claim that for any $\ba' \in \PosReals^m$,
  there is a unique $t \in \Reals$ such that $(\ba',\bb + t \bv)$
  is a state of $A$.
\end{lemma}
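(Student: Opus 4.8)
The plan is to restrict $A$ to the ray that $\bv$ traces out and reduce the claim to one-dimensional root-finding. Define the single-variable function $g(t) = A(\ba', \bb + t\bv)$ on the set of $t$ for which $\bb + t\bv \in \PosReals^n$; since every entry of $\bv$ is positive and $\bb \in \PosReals^n$, this set is an interval $(t_0, \infty)$ with $t_0 = -\min_i (b_i/v_i) < 0$, and $t = 0$ lies in it. Because $A$ is twice-differentiable and $t \mapsto (\ba', \bb + t\bv)$ is affine, $g$ is continuous, so producing a state of $A$ on the ray is exactly producing a zero of $g$ in $(t_0,\infty)$.

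Uniqueness is the easy half. The direction of travel $(\bzero, \bv)$ is a nonzero vector with all entries nonnegative, so increasing $t$ strictly increases each of the last $n$ coordinates while fixing the first $m$. Since $A$ is strictly increasing in every coordinate, the directional derivative of $A$ along $(\bzero,\bv)$ is a positive combination of positive partials, hence positive; thus $g$ is strictly increasing and has at most one zero. This same monotonicity shows that the set of $t$ with $(\ba', \bb + t\bv) \in \upper(A)$ is an upward-closed interval: once the ray enters the (monotonicity-implied upward-closed) upper contour set, it never leaves.

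Existence is the substantive half and amounts to showing $g$ changes sign on $(t_0,\infty)$. First I would exhibit $t$ with $g(t) > 0$: the given state satisfies $A(\ba,\bb) = 0$, and strict monotonicity gives $A(\ba, \bb + s\bv) > 0$ for every $s > 0$. When $\ba' \geq \ba$ this transfers immediately (monotonicity in the first block), and for general $\ba'$ the idea is to push $t$ large and use that $(\bzero,\bv)$ is a recession direction of the closed, strictly convex, upward-closed set $\upper(A)$. Second I would exhibit $t$ with $g(t) < 0$: as $t \downarrow t_0$ some coordinate $b_i + t v_i \to 0$, driving the point to the boundary of $\PosReals^{m+n}$ and out of $\upper(A)$, so $g(t) < 0$. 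The Intermediate Value Theorem then yields a zero $t^* \in (t_0,\infty)$, and the uniqueness argument above shows it is the only one.

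The main obstacle is the first half of existence, namely guaranteeing that the ray actually reaches $\upper(A)$ for large $t$, i.e. that $\lim_{t\to\infty} g(t) > 0$ for \emph{every} $\ba'$ (not just those dominating $\ba$). For the canonical constant-product AMMs this limit is $+\infty$, so the step is immediate; in general it is the growth (coercivity) of $A$ in the $\bv$-direction that must be invoked, and this is the step I would write most carefully. The boundary behaviour as $t \downarrow t_0$ and the strict-monotonicity argument for uniqueness are routine by comparison.
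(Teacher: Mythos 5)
Your skeleton is the same as the paper's: parametrize the ray $t \mapsto (\ba',\bb+t\bv)$, note that $g(t)=A(\ba',\bb+t\bv)$ is continuous and strictly increasing because $\bv$ has all positive entries and $A$ is strictly increasing in each coordinate, then locate a sign change and apply the intermediate value theorem. Your uniqueness half is complete and is actually cleaner than the paper's: the paper attributes uniqueness of $t^{*}$ to the intermediate value theorem (it really follows from the strict monotonicity you identify) and only argues on a bounded subinterval, whereas your global monotonicity argument covers all of $\Reals$.

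The gap is the one you flag yourself: you never establish that $g$ attains a positive value, i.e.\ that the ray eventually enters $\upper(A)$ for an \emph{arbitrary} $\ba'$. This is not a deferrable coercivity check; the stated axioms do not imply it. For example, $A(x,y_1,y_2)=\tfrac{1}{2}-e^{-x}-e^{-y_1}-e^{-y_2}$ has a closed, strictly convex upper contour set, is strictly increasing and smooth, and has a unique stable point for every valuation, yet any $a'$ with $e^{-a'}>\tfrac{1}{2}$ gives $g(t)<0$ for every $t$, so no state of the required form exists. The paper's own proof buries the same difficulty in the phrase ``pick a vector $\bc \in \PosReals^n$ such that $A(\ba',\bb+\bc)=0$'' (and its mirror image when $A(\ba',\bb)>0$), which presupposes that the projection $A_{\ba'}$ has a nonempty zero level set; granting that, the paper finishes by scaling $\bv$ to dominate $\bc$ and invoking monotonicity, which is exactly the argument you would need to write out. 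So your proposal is structurally aligned with the paper but incomplete at precisely the step that carries the content, and that step cannot be closed from the axioms alone --- it needs the additional hypothesis that neither you nor the paper makes explicit.
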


\begin{proof}
We seek $t \in \Reals$ such that $A(\ba',\bb + t \bv) = 0$.
There are several cases.
If $A(\ba',\bb) = 0$, then $t=0$ and we are done.
Suppose $A(\ba',\bb) < 0$.
If $A(\ba',\bb + \bv) = 0$, then $t=1$ and we are done.
If $A(\ba',\bb + \bv) < 0$,
pick a vector $\bc = (c_1, \ldots, c_n) \in \PosReals^n$
such that $A(\ba',\bb+\bc) = 0$.
Let $\epsilon > 0$,
\begin{equation*}
s_i = \frac{c_i + \epsilon}{v_i}, 0 \leq i \leq n,
\end{equation*}
and $s = \max_{0 \leq i \leq n}s_i$.
It follows that $s \bv \geq \bc + \epsilon \bone$,
and $\bb + s \bv \geq \bb + \bc + \epsilon \bone$.
Since $A_{\ba^{'}}$ is strictly increasing,
$A(\ba',\bb + s \bv) \geq A_{\ba'}(\bb+\bc) = 0$.
Define $\alpha(t): [0,1] \to \Reals$ by $\alpha(t) = A_{\ba'}(\bb + \bv + t(s -1)\bv)$.
Because $\alpha$ is continuous,
the intermediate value theorem guarantees a unique
$t^{*} \in (0,1)$ such that $\alpha(t^{*}) = 0$.
Taking $t = (1 + t^{*}(s-1))$ establishes the claim.
If $A(\ba',\bb + \bv) > 0$,
let
\begin{equation*}
s_i = \frac{c_i - \epsilon}{v_i}, 0 \leq i \leq n,
\end{equation*}
and $s = \min_{0 \leq i \leq n} s_i$,
and the claim follows from a symmetric argument.

Suppose $A(\ba',\bb) > 0$.
Pick a vector $\bc = (c_1, \ldots, c_n) \in \PosReals^n$
such that $A(\ba',\bb-\bc) = 0$.
Let $\epsilon > 0$,
$s_i = \frac{c_i + \epsilon}{v_i}, 0 \leq i \leq n$.
and $s = \max_{0 \leq i \leq n}s_i$.
It follows that $s \bv \geq \bc + \epsilon \bone$,
and $\bb - s \bv \leq \bb - \bc - \epsilon \bone$,
so $A(\ba',\bb - s \bv) \leq A(\ba',\bb-\bc) = 0$.
Let $\alpha(t): [0,1] \to \Reals$ be
$\alpha(t) = A(\ba',\bb + (t-1)s\bv)$.
As before, the intermediate value theorem guarantees a unique
$t^{*} \in (0,1)$ such that $\alpha(t^{*}) = 0$.
Taking $t = (1-t^{*})s$ establishes the claim.
\end{proof}

\begin{theorem}
\thmlabel{virtual-amm}
Given an $(n+m)$-dimensional AMM $A(\bx,\by) = 0$,
and a valuation $\bv \in \int(\Delta^m)$,
the virtualized $(A|\bv)(\bx,z)$ is an $(n+1)$-dimensional AMM.
\end{theorem}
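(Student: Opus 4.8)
The plan is to verify that $(A|\bv)$ satisfies the three defining properties of an AMM: twice-differentiability, strict monotonicity in each coordinate, and strict convexity of its upper contour set. The key observation is that $(A|\bv) = A \circ L$, where $L:\PosReals^{n+1} \to \PosReals^{n+m}$ is the affine map $L(\bx, z) = (\bx, v_1 z + r_1, \ldots, v_m z + r_m)$. Since $\bv \in \int(\Delta^m)$ forces $v_i > 0$ and the residues satisfy $r_i \geq 0$, for every $z > 0$ each coordinate $v_i z + r_i$ is strictly positive; hence $L$ really does carry the domain $\PosReals^{n+1}$ into $\PosReals^{n+m}$, the domain of $A$. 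This mirrors the projection lemma, where the analogous map was the restriction $\by \mapsto (\ba,\by)$.

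The first two checks are routine. Twice-differentiability is immediate, since $L$ is affine, hence smooth, and $A$ is twice-differentiable by hypothesis, so the composition $A \circ L$ is twice-differentiable. For monotonicity I would argue in two cases: increasing any $x_i$ increases $A(\bx, \ldots)$ directly by $A$'s strict monotonicity, while increasing $z$ increases every coordinate $v_i z + r_i$ (because $v_i > 0$), so that $A$, being strictly increasing in each of those coordinates, is strictly increasing in $z$.

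The substantive step is strict convexity of $\upper(A|\bv)$, and here I would first record that $L$ is injective: if $(\bx, z) \neq (\bx', z')$, then either the shared first $n$ coordinates differ, or $z \neq z'$, in which case $v_i z + r_i \neq v_i z' + r_i$ since $v_i > 0$. Now take distinct points $(\bx,z),(\bx',z') \in \upper(A|\bv)$; then $L(\bx,z)$ and $L(\bx',z')$ are \emph{distinct} points of $\upper(A)$. Because $L$ is affine, $L(t(\bx,z)+(1-t)(\bx',z')) = tL(\bx,z)+(1-t)L(\bx',z')$ for $t \in (0,1)$, and this point lies in $\int(\upper(A))$ by the strict convexity of $\upper(A)$. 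Consequently $A$ is strictly positive there, so $(A|\bv)(t(\bx,z)+(1-t)(\bx',z')) > 0$; by continuity of $(A|\bv)$, this strict inequality places the convex combination in the interior of $\upper(A|\bv)$, which is exactly what strict convexity of the set requires.

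The main obstacle is this last step, and specifically two points that must be handled with care: ensuring the affine embedding $L$ carries distinct points to distinct points (so the strict convexity of $\upper(A)$ genuinely applies), and correctly translating the set-theoretic ``interior'' condition — namely, a \emph{strictly} positive value of $(A|\bv)$, via continuity, is what certifies that a point lies in the interior of the upper contour set rather than merely on its boundary.
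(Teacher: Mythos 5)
Your proposal is correct and follows essentially the same route as the paper's proof: verify twice-differentiability, strict monotonicity, and strict convexity of the upper contour set, with the key step being that the substitution $z \mapsto \bv z + \br$ is affine, so convex combinations pass through it and the strict convexity of $\upper(A)$ applies directly. Your explicit remarks that the affine map is injective (so the two image points are genuinely distinct) and that a strictly positive value of $A$ certifies interiority are points the paper's proof leaves implicit, but they do not change the argument.
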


\begin{proof}\sloppy
  It is enough to check that $A|\bv$ is twice-differentiable,
  strictly increasing, and $\upper(A|\bv)$ is strictly convex.
  $(A|\bv)$ is twice-differentiable because $A$ is twice-differentiable.

  To show that $A|\bv$ is strictly increasing,
  let $\bx' \geq \bx$ and $z' \gneqq z$.
  \begin{equation*}
    \begin{aligned}
    (A|\bv)&(x_1,\ldots,x_n,z)\\
    &= A(x_1,\ldots,x_n, v_1 z + r_1, \ldots, v_m z + r_m)\\
    &< A(x_1',\ldots,x_n', v_1 z' + r_1, \ldots, v_m z' + r_m)\\
    &= (A|\bv)(x_1',\ldots,x_n',z').
    \end{aligned}
  \end{equation*}
  
  To show that $\upper(A|\bv)$ is strictly convex,
  pick distinct $(\bx,z)$ and $(\bx',z')$ on the manifold:
  \begin{equation*}
    (A|\bv)(\bx,z) = (A|\bv)(\bx',z') = 0.
  \end{equation*}
  For $t \in (0,1)$,
  \begin{equation*}
    \begin{aligned}
    (A|\bv)&(t \bx + (1-t) \bx', t z + (1-t)z')\\
    &= A(t \bx + (1-t) \bx',\bv (tz + (1-t)z') + \br) \\
    &=  A(t \bx + (1-t) \bx',t(\bv z + r)  + (1-t)(\bv z' + r)) \\
    &= A(t (\bx,\bv z+ r) + (1-t) (\bx',\bv z' + r)) > 0
    \end{aligned}
  \end{equation*}
  by the strict convexity of $\upper(A)$.
\end{proof}

Stable points are well-behaved under virtualization.
Let $A(\bx,\by) = 0$ be an $(n+m)$-dimensional AMM
and $\bv \in \int(\Delta^n)$ a valuation.
If $A$ is an $(n+m)$-dimensional AMM in state $(\ba,\bb)$,
and $\bv \in \int(\Delta^n)$ a valuation,
then $(A|\bv)(\ba,c) = A(\ba,c\bv + \br) = A(\ba,\bb) = 0$.
For any state in the virtualized AMM,
$(A|\bv)(\bx,t) = A(\bx,\bb + (t-c) \bv) = 0$.
This expression depends on $\bb$ and $\bv$,
where $c$ is a constant determined by $\bb$ and $\bv$.
Since $(A|\bv)(\bx,t)$ is an $(n+1)$-dimensional AMM,
we can write $t = f(\bx)$ for some $f: \PosReals^n \to \Reals$.
The virtualized AMM can be expressed as $(\bx,f(\bx))$
where $A(\bx,\bb + (f(\bx) - c)\bv) = 0$.

\begin{lemma}
\lemmalabel{virtual-stable}
\sloppy
    If $(\ba^{*},\bb^{*})$ is the stable point on AMM $A(\bx,\by)$ for valuation $(\bv,\bw)$,
    then $(\ba^{*},f(\ba^{*}))$ is the stable point on the virtualized AMM $(A| \bw)(\bx,t)$
    for the valuation $(\bv,\|\bw\|^2_2) / \| (\bv,\|\bw\|^2_2) \|_1$.
\end{lemma}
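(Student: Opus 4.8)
The plan is to show that finding the stable point of $(A|\bw)$ for the stated valuation is, up to an additive constant, the very same optimization problem that $(\ba^{*},\bb^{*})$ already solves on the original manifold $A$. By \thmref{virtual-amm} the virtualized AMM $(A|\bw)$ is itself an AMM, so by \lemmaref{stable-unique} it has a unique stable point for every valuation; since scaling a valuation by a positive constant does not change which point minimizes the associated dot product, it suffices to exhibit a point of $(A|\bw)$ that minimizes $\bv\cdot\bx + \|\bw\|_2^2\, t$ over the virtualized state space and to identify it with $(\ba^{*},f(\ba^{*}))$.

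First I would rewrite the virtualized objective in the original coordinates. A point $(\bx,t)$ of $(A|\bw)$ corresponds to the point $(\bx,\by)$ of $A$ with $\by = \br + t\bw$, so that $\bw\cdot\by = \bw\cdot\br + t\|\bw\|_2^2$ and hence $\|\bw\|_2^2\, t = \bw\cdot\by - \bw\cdot\br$. Substituting yields the key identity
\begin{equation*}
\bv\cdot\bx + \|\bw\|_2^2\, t \;=\; (\bv,\bw)\cdot(\bx,\by) - \bw\cdot\br .
\end{equation*}
Because $\bw\cdot\br$ is a constant, minimizing the virtualized objective over the virtualized state space is equivalent to minimizing the original objective $(\bv,\bw)\cdot(\bx,\by)$ over exactly the same set of points.

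Next I would observe that the virtualized state space sits inside $A$: each such $(\bx,\by)$ satisfies $A(\bx,\by)=0$, now with $\by$ confined to the line $\br + t\bw$. By hypothesis $(\ba^{*},\bb^{*})$ is the global minimizer of $(\bv,\bw)\cdot(\bx,\by)$ over all of $A$; provided it lies on this line it is the point written $(\ba^{*},f(\ba^{*}))$, because $f(\ba^{*})$ is the unique $t$ with $A(\ba^{*},\br+t\bw)=0$ and the strict monotonicity of $A$ along the positive direction $\bw$ makes that $t$ the one with $\br+t\bw=\bb^{*}$. A global minimizer that happens to lie in a subset is in particular a minimizer over that subset, so $(\ba^{*},f(\ba^{*}))$ minimizes the virtualized objective, and by uniqueness it is the stable point of $(A|\bw)$ for the normalized valuation.

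The step I expect to require the most care is the identification of $(\ba^{*},f(\ba^{*}))$ with $(\ba^{*},\bb^{*})$ -- that is, checking that the virtualization line $\br + t\bw$ actually passes through $\bb^{*}$ (equivalently, that virtualization is anchored at the stable state, so the residue $\br$ lies along $\bb^{*}$). This is also where the weight $\|\bw\|_2^2$ on the virtual asset is forced: it is exactly the coefficient that makes $\|\bw\|_2^2\, t$ reproduce the term $\bw\cdot\by$ and collapse the reduced problem back onto the original one, whereas any other weight would leave a genuinely different objective whose minimizer need not share the coordinate $\ba^{*}$. A self-contained alternative, should the reduction feel too slick, is to verify the first-order (Lagrange) condition directly: using $\nabla(A|\bw) = (\nabla_{\bx} A,\ \nabla_{\by} A\cdot\bw)$ together with $(\bv,\bw) = \lambda\nabla A(\ba^{*},\bb^{*})$ from the proof of \lemmaref{stable-state-homeo}, one checks that $(\bv,\|\bw\|_2^2) = \lambda\,\nabla(A|\bw)(\ba^{*},f(\ba^{*}))$, so the candidate satisfies the stable-point condition with the same positive multiplier $\lambda$.
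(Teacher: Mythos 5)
Your argument is correct and is essentially the paper's proof in direct rather than contrapositive form: the paper assumes a better point $(\ba,f(\ba))$ on $A|\bw$, lifts it to $(\ba,\,\bb^{*}+\bw(f(\ba)-f(\ba^{*})))\in A$ via exactly your identity $\|\bw\|_2^2\,t=\bw\cdot\by-\bw\cdot\br$, and contradicts the stability of $(\ba^{*},\bb^{*})$. The anchoring issue you flag---that the virtualization line must pass through $\bb^{*}$---is likewise left implicit in the paper, which tacitly takes $A$ to be virtualized while in state $(\ba^{*},\bb^{*})$.
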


\begin{proof}
\sloppy
    Suppose $(\ba^{*},f(\ba^{*}))$ is not a stable point for $(\bv,\|\bw\|^2_2)$:
    there is a distinct point $(\ba,f(\ba)) \in A|\bw$ where $\bv \cdot \ba + \|\bw\|^2_2 f(\ba) < \bv \cdot \ba^{*} + \|\bw\|^2_2 f(\ba^{*})$.
    Now define $\bb = \bb^{*} + \bw(f(\ba) - f(\ba^{*}))$, which by the virtualization construction we have $(\ba,\bb) \in A$.
    Then
    \begin{align*}
        &\bv \cdot \ba + \bw \cdot \bb = \bv \cdot \ba + \bw \cdot \bb^{*} + \bw \cdot \bw (f(\ba) - f(\ba^{*})) \\
        &= \bv \cdot \ba + \|\bw\|^2_2 f(\ba) - \|\bw\|^2_2 f(\ba^{*}) + \bw \cdot \bb^{*} \\
        &<  \bv \cdot \ba^{*} + \|\bw\|^2_2 f(\ba^{*}) - \|\bw\|^2_2 f(\ba^{*}) + \bw \cdot \bb^{*}\\
        &= \bv \cdot \ba^{*}  + \bw \cdot \bb^{*}
    \end{align*}
    This is a contradiction since $(\ba^{*},\bb^{*})$ is the stable point for $(\bv,\bw)$.
\end{proof}

\section{Sequential Composition}
\seclabel{sequential}
AMMs are intended to be composed.
A Uniswap v1 AMM typically converts between an ERC-20 token and ether cryptocurrency.
To convert, say, florin tokens to guilder tokens,
one would first convert florins to ether, then ether to guilders.
Bancor uses a proprietary BNT token for the same purpose.
Some form of composition seems to be essential to making AMMs useful,
but we will see that while there are many ways in which AMMs might be composed,
not all of them make sense.
The most basic property one would demand is \emph{closure} under composition:
the result of composing two AMMs should itself be an AMM.

Being closed under composition should not be taken for granted.
For example, consider two constant-product AMMs:
$A := (x, 1 / x)$, initialized in state $(a,1/a)$,
and
$B := (y, 1 / y)$, initialized in state $(b,1/b)$.
Their composition $A \otimes B = (x,h(x))$, where
\begin{equation}
  h(x) = \frac{1}{b + \frac{1}{a} - \frac{1}{x}}
       = \frac{a x}{x - a + a b x}.
\end{equation}
The set of constant-product AMMs is thus not closed under composition.

\subsection{One-to-One Composition}
We first consider the result of composing 2-dimensional AMMs,
that is, AMMs that trade between two asset types.

Consider AMMs $A := (x,f(x))$, initialized to $(a,f(a))$,
and $B := (y,g(y))$, initialized to $(b,g(b))$.
$A$ trades between asset types $X$ and $Y$,
and $B$ between $Y$ and $Z$.
Their composition, initialized to $(a,g(b))$,
trades between $X$ and $Z$.

Operationally, composition is defined as follows.
\begin{itemize}
\item Move $A$ from state $(a,f(a))$ to state $(x,f(x))$,
yielding profit-loss vector $(a-x,f(a)-f(x))$.
\item Add $f(a)-f(x)$ to the $Y$ balance of $B$,
  yielding new state $(b+f(a)-f(x), g(b+f(a)-f(x)))$.
\end{itemize}
This trade
takes the composition from $(a,g(b))$ to $(x, g(b+f(a)-f(x)))$.
Let $h(x) = g(b+f(a)-f(x))$.
The composition $A \otimes B$ is given in the form $(x,h(x))$. Because $f,g$ are twice-differentiable:
\begin{lemma}
  $(A \otimes B)(x,y)$ is twice-differentiable.
\end{lemma}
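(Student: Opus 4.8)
The plan is to reduce the claim to the elementary fact that a composition of twice-differentiable functions is again twice-differentiable. Recall that $A$ and $B$ are AMMs, so by the implicit function theorem invoked in \secref{definitions} their state spaces are written in the forms $(x,f(x))$ and $(y,g(y))$, where $f,g:\PosReals \to \PosReals$ are both twice-differentiable. The composition is presented in the form $(x,h(x))$ with $h(x) = g(b + f(a) - f(x))$, so as a function on $\PosReals^2$ we may take $(A \otimes B)(x,y) = y - h(x)$, whose zero level set is exactly the manifold $\set{(x,y) : y = h(x)}$. Since the map $y \mapsto y$ is smooth, it suffices to show that $h$ is twice-differentiable.

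First I would introduce the inner map $u(x) = b + f(a) - f(x)$, which is twice-differentiable because $f$ is and $b + f(a)$ is constant, with $u'(x) = -f'(x)$ and $u''(x) = -f''(x)$. Then $h = g \circ u$ is a composition of twice-differentiable functions, and the chain rule gives $h'(x) = g'(u(x))\,u'(x) = -g'(u(x))\,f'(x)$; differentiating once more (product rule on the right-hand side, using $\tfrac{d}{dx}g'(u(x)) = -g''(u(x))f'(x)$) yields $h''(x) = g''(u(x))\,(f'(x))^2 - g'(u(x))\,f''(x)$. Each term on the right is a product and composition of continuous functions, so $h''$ exists and is continuous; hence $h$, and therefore $(A \otimes B)$, is twice-differentiable.

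The only point that needs care is that the argument $u(x) = b + f(a) - f(x)$ must lie in the domain $\PosReals$ on which $g$ and its derivatives are defined, since otherwise the chain rule cannot be applied pointwise. This is guaranteed by the operational definition of composition: the quantity $b + f(a) - f(x)$ is precisely the $Y$-balance fed into $B$, and the domain of $A \otimes B$ is exactly the set of $x$ for which this intermediate balance is a valid (positive) state of $B$. I do not expect a genuine obstacle here; the substance of the lemma is simply that the sequential composition operator preserves the smoothness demanded by the continuity axiom, and this follows directly from the chain rule once the composition has been written as $g \circ u$.
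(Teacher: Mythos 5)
Your proof is correct and follows the same route as the paper, which simply asserts the lemma with the remark ``Because $f,g$ are twice-differentiable'' --- i.e., twice-differentiability of $h = g(b+f(a)-f(\cdot))$ via the chain rule. Your version just spells out the chain-rule computation and the (reasonable) domain caveat explicitly.
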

\begin{lemma}
  $(A \otimes B)(x,y)$ is strictly increasing.
\end{lemma}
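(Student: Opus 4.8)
The plan is to reduce the claim to a statement about the profile function $h(x) = g(b + f(a) - f(x))$. The composition is presented only in the curve form $(x,h(x))$, so to speak of $(A \otimes B)$ as a function on $\PosReals^2$ I would fix a convenient representative whose zero level set is $\set{(x,y) : y = h(x)}$, namely $(A \otimes B)(x,y) = y - h(x)$. Its partials are $\partial_y (A\otimes B) = 1 > 0$ and $\partial_x (A \otimes B) = -h'(x)$, so this function is strictly increasing in each coordinate exactly when $h$ is strictly decreasing. (This is not an artifact of the choice: along any level set of a function that is strictly increasing in each coordinate the slope $dy/dx = -\partial_x/\partial_y$ is negative, so a valid AMM level set \emph{must} slope downward.) Hence it suffices to show $h'(x) < 0$ on the domain of the composition.

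First I would record the signs of $f'$ and $g'$. Since $A = (x,f(x))$ and $B = (y,g(y))$ are AMMs, their defining functions are strictly increasing in each coordinate; applying the implicit function theorem to the identity $A(x,f(x)) = 0$ gives $f'(x) = -A_x/A_y < 0$, and an identical computation gives $g'(y) < 0$ wherever $g$ is defined. Thus both one-dimensional profiles are strictly decreasing, as one expects of a two-asset AMM whose level set slopes downward. Next I would differentiate $h$ by the chain rule, obtaining $h'(x) = -f'(x)\, g'(b + f(a) - f(x))$. Here $-f'(x) > 0$ while $g'(\cdot) < 0$, so the product is strictly negative; $h$ is therefore strictly decreasing and, by the reduction above, $(A \otimes B)(x,y)$ is strictly increasing in each coordinate.

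The only point requiring care — and the main obstacle — is that the inner argument $b + f(a) - f(x)$ must remain in the domain of $g$, i.e.\ strictly positive, for $g'$ to be defined and negative there. When $x \geq a$ this holds automatically because $f(a) - f(x) \geq 0$, so $B$'s $Y$-balance only grows; when $x < a$ the composition withdraws $Y$ from $B$, so it is well defined only for those $x$ with $b + f(a) - f(x) > 0$, which restricts $x$ to an interval of admissible states. On that interval the chain-rule computation and the sign analysis go through verbatim, so the domain restriction is mild and does not affect monotonicity.
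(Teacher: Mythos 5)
Your proposal is correct and rests on the same decomposition the paper uses: represent the composition as $(A\otimes B)(x,y)=y-h(x)$ with $h(x)=g(b+f(a)-f(x))$ and reduce everything to $h$ being strictly decreasing. The difference is purely in how that last fact is established. The paper argues order-theoretically: it takes as given that $f$ and $g$ are strictly decreasing and chains the inequalities $x'>x \Rightarrow f(x')<f(x) \Rightarrow b+f(a)-f(x')>b+f(a)-f(x) \Rightarrow g(\cdots')<g(\cdots)$, never touching a derivative. You instead differentiate, writing $h'(x)=-f'(x)\,g'(b+f(a)-f(x))$ and reading off the sign. Both work, but the calculus route has one small soft spot: from $A$ strictly increasing in each coordinate you can only conclude $A_x\geq 0$ pointwise, so $f'=-A_x/A_y\leq 0$ rather than $f'<0$ at every point; to recover \emph{strict} monotonicity of $h$ from $h'\leq 0$ you would add the observation that $f'$ (hence $h'$) cannot vanish on an interval, since $f$ is genuinely strictly decreasing. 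The paper's inequality chain sidesteps this entirely, which is what it buys by avoiding derivatives. On the other hand, your write-up is more careful than the paper's in two places the paper glosses over: you actually justify why $f$ and $g$ are decreasing (the paper just says ``by hypothesis''), and you flag the domain issue that $b+f(a)-f(x)$ must stay positive for the composition to be defined, which the paper never mentions. Neither of these affects correctness; your argument is sound.
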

\begin{proof}
  We show that if $(x',y') \gneqq (x,y)$, meaning at least one coordinate is strictly greater,
  then $(A \otimes B)(x',y') > (A \otimes B)(x,y)$.
  Recall that $f$ and $g$ are strictly decreasing by hypothesis.
  There are two cases.
  First, suppose $x' > x$ and $y' \geq y$.
\begin{align*}
  x' &> x\\
  f(x') &< f(x)\\
  b+f(a)-f(x') &> b+f(a)-f(x)\\
  g(b+f(a)-f(x')) &< g(b+f(a)-f(x))\\
  y-g(b+f(a)-f(x')) &> y-g(b+f(a)-f(x))\\
  y'-g(b+f(a)-f(x')) &> y-g(b+f(a)-f(x))\\
  (A \otimes B)(x,y') &> (A \otimes B)(x,y)
\end{align*}
  The second case, where $x' \geq x$ and $y' > y$ is similar.
\end{proof}
\begin{lemma}
  The upper contour set $\upper(A \otimes B)$ is strictly convex.
\end{lemma}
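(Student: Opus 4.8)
The plan is to reduce the claim to the strict convexity of the single-variable function $h(x) = g(b + f(a) - f(x))$, and then to establish the latter by composing elementary convexity and monotonicity inequalities rather than by differentiating twice.

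First I would observe that, since $A \otimes B$ is two-dimensional and strictly increasing in its second coordinate (by the previous lemma), its upper contour set at $0$ is exactly the epigraph $\epi(h) = \set{(x,y) : h(x) \leq y}$. By \lemmaref{epigraph}, $\upper(A \otimes B)$ is strictly convex if and only if $h$ is a strictly convex function, so it suffices to prove the latter. Next I would record the two properties of $f$ and $g$ that I need. Both are strictly convex: this is immediate from \lemmaref{epigraph} applied to $A$ and $B$, whose upper contour sets $\epi(f)$ and $\epi(g)$ are strictly convex by the Convexity axiom. Both are also strictly decreasing, since $A$ and $B$ are strictly increasing in each coordinate, so moving right along a level set forces the other coordinate down. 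Writing $\phi(x) = b + f(a) - f(x)$, the strict convexity of $f$ makes $\phi$ strictly concave and the strict monotonicity of $f$ makes $\phi$ strictly increasing; moreover $h = g \circ \phi$.

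The core of the argument is then a three-step chain for distinct $x, x'$ and $t \in (0,1)$. Strict concavity of $\phi$ gives $\phi(t x + (1-t)x') > t\phi(x) + (1-t)\phi(x')$; applying the strictly decreasing map $g$ reverses this into the strict inequality $g(\phi(t x + (1-t)x')) < g(t\phi(x) + (1-t)\phi(x'))$; and convexity of $g$ bounds the right-hand side above by $t\, g(\phi(x)) + (1-t)\, g(\phi(x'))$. Chaining these yields $h(t x + (1-t)x') < t\, h(x) + (1-t)\, h(x')$, which is exactly strict convexity of $h$.

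The step I expect to be the main obstacle---or rather the main temptation to get wrong---is the strictness. A direct second-derivative computation gives $h''(x) = g''(\phi(x))\,(f'(x))^2 - g'(\phi(x))\,f''(x)$, which shows $h'' \geq 0$ but cannot by itself rule out $h'' = 0$, because a strictly convex twice-differentiable function may have a vanishing second derivative at isolated points. Routing the strictness through the strict concavity of $\phi$ followed by the strict monotonicity of $g$ sidesteps this issue entirely, so I would avoid the calculus route and argue directly from the definitions of strict convexity and strict monotonicity.
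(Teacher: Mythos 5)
Your proposal is correct and follows essentially the same route as the paper: reduce to strict convexity of $h$ via the epigraph lemma, then chain strict convexity of $f$ (giving strict concavity of $x \mapsto b+f(a)-f(x)$) with the monotonicity and convexity of $g$. The only difference is presentational—you split the final step into an explicit monotone-reversal followed by a convexity bound, where the paper compresses both into one inequality—and your closing remark about why the second-derivative route cannot deliver strictness is a fair observation but not a departure from the paper's argument.
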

\begin{proof}
  Since $\upper(A \otimes B) = \epi(h)$, by \lemmaref{epigraph}, it is enough to check the strict convexity of $h$.
  Pick two distinct $x$ and $x'$.
  Recall that for $t \in (0,1), f(t x + (1-t)x') < t f(x) + (1-t)f(x')$,
  and similarly for $g(y)$ by hypothesis.
  For $t \in (0,1)$,
  \begin{equation*}
    f((1-t)x + t x') \\
    < (1-t)f(x) + t f(x')
  \end{equation*}
  \begin{multline*}
    b+f(a)-f((1-t)x + t x') \\
    > (1-t)(b+f(a)-f(x))\\ + t(b+f(a)-f(x'))
  \end{multline*}
  \begin{multline*}
      g(b+f(a)-f((1-t)x + t x')) \\
      < (1-t)g(b+f(ag)-f(x)) \\+ tg(b+f(a)-f(x')))
  \end{multline*}
  \begin{equation*}
    h((1-t)x + t x) < (1-t)h(x) + t h(x')\\
  \end{equation*}
  which proves the claim.
\end{proof}
The next lemma relates stability and sequential composition
for 2-dimensional AMMs.

\begin{lemma}
  Let $(v,w,v')$ be a valuation for assets $X,Y,Z$.
  If $(a,f(a))$ is the stable point for AMM $A := (x,f(x))$
  for valuation $(v,w)/ \|(v,w)\|_1$
  and $(b,g(b))$ the stable point for AMM $(y,g(y))$
  for valuation $(w,v')/ \|(w,v')\|_1$,
  then $(a,h(a))$ is the stable point for the valuation $(v,v')/\|(v,v')\|_1$.
\end{lemma}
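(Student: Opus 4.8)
The plan is to reduce both hypotheses and the conclusion to first-order (tangency) conditions on the defining functions, exactly as in the worked two-dimensional examples earlier in this section, and then verify the conclusion's condition by a single chain-rule computation.

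First I would translate the two stable-point hypotheses into slope conditions. As recorded in the worked example for a two-dimensional AMM written as $(x,f(x))$, the stable point for a normalized valuation depends only on the ratio of its components: minimizing $\bv \cdot (x,f(x))$ along the curve forces the tangency condition $f'(a) = -v/w$ (the normalization $\|(v,w)\|_1 = v+w$ cancels, and $w>0$ makes the ratio well defined). Applying the same reasoning to $B := (y,g(y))$ with valuation $(w,v')/\|(w,v')\|_1$ yields $g'(b) = -w/v'$.

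Second I would compute the slope of the composition at $a$. Since $h(x) = g(b + f(a) - f(x))$, the chain rule gives $h'(x) = -g'(b+f(a)-f(x))\,f'(x)$, and at $x=a$ the argument of $g'$ collapses to $b$, so $h'(a) = -g'(b)\,f'(a)$. Substituting the two tangency conditions,
\[
h'(a) = -\left(-\frac{w}{v'}\right)\left(-\frac{v}{w}\right) = -\frac{v}{v'}.
\]
But $h'(a) = -v/v'$ is precisely the tangency condition for $(a,h(a))$ to be the stable point of the two-dimensional AMM $A \otimes B = (x,h(x))$ for the valuation $(v,v')/\|(v,v')\|_1$. It is also worth recording that $h(a)=g(b)$, so the claimed stable point coincides with the composition's initial state $(a,g(b))$, consistent with the setup, and it lies on the state space automatically since the composition is given in the form $(x,h(x))$.

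The main obstacle is not the calculation but its justification: I must ensure the tangency condition is both necessary and sufficient to identify the stable point. Necessity follows from the Lagrange characterization used in \lemmaref{stable-state-homeo}, where a stable point has the valuation proportional to the gradient. Sufficiency is where the earlier structural lemmas do the real work: the preceding lemmas show $A \otimes B$ is an AMM, so by \lemmaref{epigraph} the function $h$ is strictly convex, whence $x \mapsto \bar v\, x + \bar v'\, h(x)$ (with $\bar v,\bar v'$ the normalized weights) has a \emph{unique} minimizer characterized by a vanishing derivative; combined with the uniqueness of stable points from \lemmaref{stable-unique}, the point satisfying $h'(a)=-v/v'$ must be the stable point. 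A minor point to keep in view throughout is that $v$, $w$, and $v'$ are all strictly positive, so every normalization and every ratio appearing above is well defined.
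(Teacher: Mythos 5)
Your proof is correct, but it takes a genuinely different route from the paper's. The paper argues by contradiction directly from the defining inequalities of stable points: assuming some $(x,h(x))$ with $x\neq a$ beats $(a,h(a))$, it chains the two hypothesis inequalities together by the substitution $y = b + f(a) - f(x)$ and derives $vx + v'h(x) < vx + v'h(x)$. That argument is purely algebraic and never differentiates anything. You instead pass to the first-order characterization: $f'(a) = -v/w$, $g'(b) = -w/v'$, and the chain rule gives $h'(a) = -g'(b)f'(a) = -v/v'$, which identifies the stable point of $A\otimes B$ because $x \mapsto vx + v'h(x)$ is strictly convex. Your route makes the underlying intuition explicit --- exchange rates compose multiplicatively, so matching rates at $A$ and $B$ forces a matching rate on the composition --- and it is arguably shorter once the necessity and sufficiency of the tangency condition are in hand; you are right that this is where the real burden lies, and your justification (Lagrange conditions for necessity, strict convexity of $h$ via the preceding lemma and \lemmaref{epigraph} for sufficiency, interiority of the domain $\PosReals$ so no boundary minima arise) is adequate. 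The paper's argument buys independence from the differential characterization: it would survive essentially unchanged in settings where one only knows convexity and monotonicity, which is why the paper reuses the same inequality-chaining template verbatim for the many-to-one case.
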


\begin{proof}
If $(a,h(a))$ is not the stable point for $(v,v')$,
there is some $x \neq a$ such that $(x,h(x))$ is the stable point.
By assumption, for $x \neq a$ and $y \neq b$,
\begin{align*}
    &v a + wf(a) < v x + w f(x)\\
    &wb + v'g(b) < wy + v'g(y)
\end{align*}
Also by assumption,
\begin{align*}
  vx + v'h(x)
  &< va + v'h(a)\\
  &= va +v'g(b + f(a) - f(a))\\
  &= va + v'g(b) \\
  &= va + wf(a) - wf(a) + wb - wb + v'g(b) \\
    &< vx + wf(x) - wf(a) + wy + v'g(y) - w b\\
    &= vx + w(f(x) - f(a)) + w(y - b) + v'g(y) \\
    &= vx + w(f(x) - f(a)) + w(f(a) - f(x))\\
    &\quad \quad + v'g(b + f(a) - f(x))\\
    &= vx +v'h(x),
\end{align*}
a contradiction.
(The last step follows by taking $y = b + f(a) - f(x) \neq b$,
possible since $f$ strictly decreasing).
\end{proof}

The converse is false.
Consider two constant product AMMs $A := (x,\frac{1}{x})$
and $B := (y,\frac{1}{y})$,
both initially in state $(1,1)$.
The composed AMM is given by $A \otimes B$ $(x,\frac{x}{2x-1})$ with state $(1,1)$.
Now if $(v,w,v') = (\frac{1}{4},\frac{1}{2},\frac{1}{4})$,
$(1,1)$ is the stable point of $A \otimes B$ with respect to $(v,w,v')$.
However, the stable point for $(v,w)$ on $A$ is
$(\sqrt{2},\frac{\sqrt{2}}{2})$
and $(w,v')$ on $B$ is $(\frac{\sqrt{2}}{2},\sqrt{2})$.

\subsection{Many-to-One Composition}
\seclabel{many-to-one}
AMM $A$ trades asset types $X_1, \ldots, X_m, Z$,
with initial state $(\ba, f(\ba))$,
where $\ba=(a_1,\ldots,a_m)$,
and $f(\ba)$ is the explicit function
defining the $Z$ coordinate in terms of the others.

AMM $B$ trades asset types $Z, Y_1, \ldots, Y_n$,
with initial state $(c,\bb, g(c,\bb))$,
where $\bb=(b_1,\ldots,b_{n-1})$,
and $g(c,\bb)$ is the function
defining the $Y_n$ coordinate in terms of the others.

The $Z$ asset flows between $A$ and $B$ but is
not directly accessible to traders.
The composition $A \otimes B$ trades asset types
$X_1, \ldots, X_m, Y_1, \ldots, Y_n$,
with initial state $(\ba, \bb, h(\ba,\bb))$,
for $h$ to be defined.

Operationally, the composition works as follows.
The trader changes each $a_i$ to $x_i$, $0 \leq i \leq m$,
and each $b_i$ to $y_i$, $0 \leq i \leq n-1$.
Let $\bx=(x_1,\ldots,x_m)$ and $\by=(y_1,\ldots,y_{n-1})$.
The new state of $A$ is $(\bx,f(\bx))$.
The amount of $Z$ that flows from $A$ to $B$ is $f(\ba)-f(\bx)$.
The new state of $B$ is
$(c+f(\ba)-f(\bx), \by, g(c+f(\ba)-f(\bx)))$.
The new state of $A \otimes B$ is
$(\bx,\by,g(c+f(\ba)-f(\bx)))$

Define $h(\bx,\by) = g(c+f(\ba)-f(\bx), \by)$.
$(A \otimes B)(\bx,\by,z) = z - h(\bx,\by)$.
\begin{lemma}
  $(A \otimes B)(\bx,\by,z)$ is twice-differentiable.
\end{lemma}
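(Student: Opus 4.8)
The plan is to reduce the claim to the elementary fact that compositions, sums, and differences of twice-differentiable functions are again twice-differentiable. The function in question is $(A \otimes B)(\bx,\by,z) = z - h(\bx,\by)$ with $h(\bx,\by) = g(c + f(\ba) - f(\bx),\by)$, so the whole argument is really just bookkeeping about which pieces are twice-differentiable and how they are combined.

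First I would record that $f$ and $g$ are themselves twice-differentiable. This is not an extra hypothesis but a consequence of $A$ and $B$ being AMMs: by the continuity axiom their defining functions are twice-differentiable, and the implicit function theorem (invoked earlier in the excerpt) guarantees that the explicit coordinate representations $f$ and $g$ inherit twice-differentiability on their domains. Here $c$ and $\ba$ are fixed initial-state data, so $f(\ba)$ is a constant.

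Next I would exhibit $h$ as a composition. Define the inner map $\Phi(\bx,\by) = (c + f(\ba) - f(\bx),\by)$ from $\PosReals^m \times \PosReals^{n-1}$ into the domain of $g$. Its first component is $c + f(\ba) - f(\bx)$, which is twice-differentiable in $\bx$ because $f$ is and because adding constants preserves differentiability; its remaining components are the identity in $\by$, hence smooth. Therefore $\Phi$ is twice-differentiable. Since $h = g \circ \Phi$ and $g$ is twice-differentiable, the chain rule gives that $h$ is twice-differentiable. Finally, $(A \otimes B)(\bx,\by,z) = z - h(\bx,\by)$ is the difference of the smooth function $(\bx,\by,z) \mapsto z$ and the twice-differentiable function $h$, so it is twice-differentiable, establishing the claim.

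There is no real obstacle here; the only point requiring care is the first step, namely being explicit that the twice-differentiability of the composite rests on the twice-differentiability of the explicit functions $f$ and $g$, which in turn comes from the continuity axiom together with the implicit function theorem. Everything after that is a routine appeal to the chain rule and to closure of twice-differentiable functions under linear combination.
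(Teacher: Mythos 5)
Your proposal is correct and follows the same route as the paper, which simply observes that the claim is immediate because $f$ and $g$ are twice-differentiable by hypothesis; you have merely spelled out the chain-rule and closure-under-linear-combination details that the paper leaves implicit. No gap.
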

\begin{proof}
  Immediate because $f,g$ are twice-differentiable by hypothesis.
\end{proof}
\begin{lemma}
  $(A \otimes B)(\bx,\by,z)$ is strictly increasing.
\end{lemma}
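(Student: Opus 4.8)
The plan is to mirror the one-to-one argument and reduce the claim to coordinatewise strict monotonicity. Since $(A \otimes B)(\bx,\by,z) = z - h(\bx,\by)$ with $h(\bx,\by) = g(c + f(\ba) - f(\bx), \by)$, it suffices to show that $h$ is strictly \emph{decreasing} in each $x_i$ and each $y_j$: then whenever $(\bx',\by',z') \gneqq (\bx,\by,z)$, the $z$-term moves up weakly, the $-h$ term moves up weakly, and at least one of these moves strictly, forcing $(A\otimes B)(\bx',\by',z') > (A\otimes B)(\bx,\by,z)$. As in the $2$-dimensional case, I would first record that the explicit functions $f$ and $g$ defining $A$ and $B$ are strictly decreasing in each argument; this is the implicit-function consequence of $A$ and $B$ being strictly increasing AMMs that was already invoked earlier.

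The $z$- and $\by$-directions are immediate. Strict monotonicity in $z$ is trivial, since the dependence of $z - h$ on $z$ is exactly the identity. For the $\by$-direction, $h = g(\,\cdot\,,\by)$ depends on each $y_j$ only through the last $n-1$ arguments of $g$, where $g$ is strictly decreasing, so $-h$ is strictly increasing in each $y_j$.

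The real content, and the main thing to get right, is the $\bx$-direction, where monotonicity is routed through the hidden asset $Z$ as a composition of two strictly decreasing maps. Here I would track signs along the chain: increasing some $x_i$ (holding the other $\bx$-coordinates fixed, so that $\bx' \gneqq \bx$) gives $f(\bx') < f(\bx)$, hence $c + f(\ba) - f(\bx') > c + f(\ba) - f(\bx)$, i.e. more $Z$ flows into $B$; since $g$ is strictly decreasing in its first argument, $g(c + f(\ba) - f(\bx'), \by) < g(c + f(\ba) - f(\bx), \by)$, that is $h(\bx',\by) < h(\bx,\by)$, so $-h$ strictly increases. The two sign flips, one from $f$ and one from $g$, compose to a strict increase, exactly the ``double-negative'' that makes a forward composition of two decreasing AMM profiles behave monotonically. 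Assembling these coordinatewise statements under the general $\gneqq$ hypothesis then finishes the proof; the only subtlety, and the one place where care is needed, is bookkeeping when several coordinates increase simultaneously while only one inequality is strict, which parallels the two-case split used in the $2$-dimensional lemma.
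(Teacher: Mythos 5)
Your proof is correct and follows essentially the same route as the paper's: both arguments chain the strict monotonicity through the two sign flips, $f$ strictly decreasing and then $g$ strictly decreasing in the hidden-asset coordinate, and then handle the weak/strict bookkeeping across coordinates. If anything you are slightly more careful than the paper's displayed inequality chain, which suppresses the $\by$-argument of $g$ and writes $f(\bx',\by')$ where $f$ depends only on $\bx$; your explicit separation of the $\bx$-, $\by$-, and $z$-directions fixes that.
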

\begin{proof}
  We show that if $(\bx',\by',z') \gneqq (\bx,\by,z)$,
  then $(A \otimes B)(\bx',\by',z') > (A \otimes B)(\bx,\by,z)$.
  There are two cases.
  First, suppose $(\bx',\by') \gneqq (\bx,\by)$ and $z' \geq z$.
\begin{align*}
  (\bx',\by') &\gneqq (\bx,\by)\\
  f(\bx',\by') &< f(\bx,\by)\\
  \bb+f(\ba)-f(\bx',\by') &> \bb+f(\ba)-f(\bx,\by)\\
  g(\bb+f(\ba)-f(\bx',\by')) &< g(\bb+f(\ba)-f(\bx,\by))\\
  z-g(\bb+f(\ba)-f(\bx',\by')) &> z-g(\bb+f(\ba)-f(\bx,\by))\\
  z'-g(\bb+f(\ba)-f(\bx',\by')) &> z-g(\bb+f(\ba)-f(\bx,\by))\\
  (A \otimes B)(\bx',\by',z') &> (A \otimes B)(\bx,\by,z)
\end{align*}
  The second case, where $\bx' \geq \bx$ and $z' > z$ is similar.
\end{proof}
\begin{lemma}
  $\upper(A \otimes B)$ is strictly convex.
\end{lemma}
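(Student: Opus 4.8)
The plan is to reuse the reduction that worked in every earlier convexity lemma: since the level set $(A\otimes B)(\bx,\by,z)=z-h(\bx,\by)=0$ is exactly the graph of $h$, the upper contour set $\upper(A\otimes B)=\set{(\bx,\by,z):z\geq h(\bx,\by)}$ coincides with the epigraph $\epi(h)$. By \lemmaref{epigraph} it therefore suffices to prove that $h$ is strictly convex, which turns a statement about a three-part manifold into a statement about a single function.

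To prove strict convexity of $h$, I would expose the composition explicitly. Write $u(\bx)=c+f(\ba)-f(\bx)$ for the amount of $Z$ that flows into $B$, so that $h(\bx,\by)=g(u(\bx),\by)$. Two structural facts drive the argument. First, $f$ is strictly convex (it is the explicit form of the AMM $A$, whose upper contour set is strictly convex by the convexity axiom, via \lemmaref{epigraph}), so $u$, being a constant minus $f$, is a strictly concave function of $\bx$. Second, $g$ is simultaneously strictly convex and strictly decreasing in its first argument: strictly convex because $\upper(B)=\epi(g)$ is strictly convex, and strictly decreasing in the $Z$-coordinate because $B$'s defining function is strictly increasing in each coordinate. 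It is this combination of decreasing-ness in the $Z$-slot and joint convexity that makes the composition convex.

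Given distinct points $(\bx,\by)$ and $(\bx',\by')$ and $t\in(0,1)$, set $\bx_t=t\bx+(1-t)\bx'$, $\by_t=t\by+(1-t)\by'$, $z=u(\bx)$, and $z'=u(\bx')$. Strict concavity of $u$ gives $u(\bx_t)\geq tz+(1-t)z'$, and because $g$ is decreasing in its first slot,
\[ h(\bx_t,\by_t)=g(u(\bx_t),\by_t)\leq g(tz+(1-t)z',\,\by_t), \]
while strict convexity of $g$ bounds the right-hand side by $t\,g(z,\by)+(1-t)\,g(z',\by')=t\,h(\bx,\by)+(1-t)\,h(\bx',\by')$. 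Chaining the two inequalities yields $h(\bx_t,\by_t)\leq t\,h(\bx,\by)+(1-t)\,h(\bx',\by')$.

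The delicate step — and the one I expect to be the main obstacle — is upgrading this to a \emph{strict} inequality, since each of the two bounds degenerates to equality on part of the domain. I would handle it by a case split on the input coordinates. If $\bx\neq\bx'$, then strict concavity of $u$ makes the first inequality strict. If instead $\bx=\bx'$ (so $z=z'$) but $\by\neq\by'$, then $(z,\by)\neq(z',\by')$, and strict convexity of $g$ makes the second inequality strict. Because the two input points are distinct, at least one case always applies, so the combined inequality is strict and $h$ is strictly convex, establishing the claim.
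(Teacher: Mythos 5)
Your proof is correct and follows essentially the same route as the paper's: reduce to strict convexity of $h$ via the epigraph lemma, then chain the strict convexity of $f$, the sign flip, and the convexity and monotonicity of $g$ in its $Z$-argument. You are in fact more careful than the paper on the one delicate point --- the paper simply asserts strict inequalities throughout, whereas your case split on $\bx \neq \bx'$ versus $\bx = \bx'$, $\by \neq \by'$ is genuinely needed, since the first inequality in the chain degenerates to an equality when $\bx = \bx'$.
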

\begin{proof}
   Again it is enough to show $h$ is strictly convex.
   Pick two different points $(\bx,\by)$ and $(\bx',\by')$.
  Recall that $f$ and $g$ are strictly convex by hypothesis.
  For $t \in (0,1)$,
  \begin{multline*}
      f((1-t)(\bx,\by) + t(\bx',\by'))\\
      < (1-t)f(\bx,\by) + t f(\bx',\by')
  \end{multline*}
  \begin{multline*}
      b+f(a)-f((1-t)(\bx,\by) + t (\bx',\by'))\\
      > (1-t)(b+f(a)-f(\bx,\by))\\ + t(b+f(a)-f(\bx',\by'))
  \end{multline*}
  \begin{multline*}
    g(b+f(a)-f((1-t)(\bx,\by) + t(\bx',\by'))) \\
    < (1-t)g((b+f(a)-f(\bx,\by))\\ + tg(b+f(a)-f(\bx',\by'))
  \end{multline*}
  \begin{multline*}
    h((1-t)(\bx,\by) + t(\bx',\by')) \\
    < (1-t)h(\bx,\by) + t h(\bx',\by')
  \end{multline*}
\end{proof}
Here is how stable points behave under sequential composition of multi-dimensional AMMs.

\begin{theorem}
\sloppy
  Let $(\bv,w,\bv')$ be a valuation, $\bv \in \Reals^n$, 
  $w \in \Reals$, and
  $\bv' \in \Reals^m$.
  If $(\ba^{*},f(\ba^{*}))$ is the stable point on AMM $(\bx,f(\bx))$
  for valuation $(\bv,w) / \|(\bv,w)\|_1$ and
  $(c,\bb^{*},g(c,\bb^{*}))$ is the stable point on AMM $(z,\by,g(z,\by))$
  for  valuation $(w,\bv') / \|(w,\bv')\|_1$,
  then $(\ba^{*},\bb^{*},h(\ba^{*},\bb^{*}))$ is the stable point for the valuation
  $(\bv,\bv') / \|(\bv,\bv')\|_1$.
  \end{theorem}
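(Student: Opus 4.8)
The plan is to generalize the proof of the two-dimensional stability lemma, taking advantage of the fact that the asset $Z$ shared between $A$ and $B$ is a single scalar coordinate even though $A$ and $B$ are themselves higher-dimensional; this is what keeps the many-to-one case structurally identical to the two-asset one. Since stable points exist and are unique (\lemmaref{stable-upper}, \lemmaref{stable-unique}), it suffices to argue by contradiction. Suppose $(\ba^*, \bb^*, h(\ba^*, \bb^*))$ is not the stable point of $A \otimes B$ for $(\bv,\bv')/\|(\bv,\bv')\|_1$; then some distinct state $(\bx,\by,h(\bx,\by))$ strictly minimizes the composition's objective, so $\bv \cdot \bx + \bv' \cdot (\by,h(\bx,\by)) < \bv \cdot \ba^* + \bv' \cdot (\bb^*,h(\ba^*,\bb^*))$. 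Because multiplying a valuation by a positive constant does not change its stable point, I would drop the $\ell_1$ normalizers and record the hypotheses as plain minimizations: $A$'s stability gives $\bv \cdot \ba^* + w f(\ba^*) \le \bv \cdot \bx + w f(\bx)$, strict when $\bx \neq \ba^*$, and $B$'s stability gives $wc + \bv' \cdot (\bb^*, g(c,\bb^*)) \le wz + \bv' \cdot (\by, g(z,\by))$, strict when $(z,\by) \neq (c,\bb^*)$.

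The engine of the argument is the substitution that ties the two AMMs together along the intermediate asset: feed into $B$ exactly the $Z$ that $A$ emits, namely $z = c + f(\ba) - f(\bx)$, so that $g(z,\by) = h(\bx,\by)$ by the definition of $h$. (At the composition's stable point, where $A$ sits at its own stable state $\ba = \ba^*$, this flow is $z = c$, so $h(\ba^*,\bb^*) = g(c,\bb^*)$.) Adding $A$'s and $B$'s objectives under this substitution, the $wf(\bx)$ term cancels against the $wz$ term, leaving the identity
\[
\bigl[\bv \cdot \bx + w f(\bx)\bigr] + \bigl[wz + \bv' \cdot (\by, g(z,\by))\bigr]
= \bv \cdot \bx + \bv' \cdot (\by, h(\bx,\by)) + w\bigl(c + f(\ba)\bigr),
\]
so that, up to the additive constant $w(c+f(\ba))$, the composition's objective is just the sum of $A$'s and $B$'s. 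Chaining $A$'s and $B$'s stability inequalities through this identity --- expanding $\bv \cdot \ba^* + \bv' \cdot (\bb^*, h(\ba^*,\bb^*))$ into the bracketed pair evaluated at the stable points, bounding it below by the same pair evaluated at $\bx$ and at $(z,\by)$, and collapsing via the identity --- rewrites the value at $(\ba^*,\bb^*)$ as the value at $(\bx,\by)$, yielding $\bv \cdot \ba^* + \bv' \cdot (\bb^*, h(\ba^*,\bb^*)) \le \bv \cdot \bx + \bv' \cdot (\by, h(\bx,\by))$. This contradicts the strict inequality assumed above.

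The main obstacle is not conceptual but a strictness check: the chain must carry at least one strict inequality. Here I would split into cases. If $\bx \neq \ba^*$ then $A$'s inequality is already strict. If instead $\bx = \ba^*$ while $\by \neq \bb^*$, the flow forces $z = c + f(\ba) - f(\ba^*) = c$, so the point fed to $B$ is $(c,\by) \neq (c,\bb^*)$ and $B$'s inequality is strict. Either way the composite inequality is strict and the contradiction stands. The only remaining care is routine bookkeeping --- matching each coordinate of the valuation $(\bv,\bv')$ to the correct coordinate of the multi-dimensional states $\bx$ and $\by$ and the final $Y_n$-coordinate $h$ --- but since the coupling between $A$ and $B$ runs entirely through the one scalar $z$, no genuinely new difficulty arises beyond the two-dimensional lemma.
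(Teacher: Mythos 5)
Your proof follows essentially the same route as the paper's: argue by contradiction, couple the two stability inequalities through the substitution $z = c + f(\ba^{*}) - f(\bx)$ so that $g(z,\by) = h(\bx,\by)$, and chain the resulting inequalities to contradict the assumed strict improvement at $(\bx,\by)$. Your explicit case split ensuring at least one inequality in the chain is strict (when $\bx = \ba^{*}$ the flow forces $z = c$, so $B$'s inequality is the strict one) is a point the paper's proof glosses over, but it is a refinement of, not a departure from, the same argument.
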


\begin{proof}
if $(\ba^{*},\bb^{*},h(\ba^{*},\bb^{*}))$ is not a stable point for $(\bv,\bv')$,
there is some $(\bx,\by) \neq (\ba^{*},\bb^{*})$ such that
$(\bx,\by,h(\bx,\by))$ is the stable point.
We write $\bv' = (\bv'_{m-1},v'_m)$ to separate the first $m-1$ components from the $m$-th component.
By assumptionm for $\bx \neq \ba^{*}$ and $(z,\by) \neq (c,\bb^{*})$,
\begin{align*}
    bv \cdot \ba^{*} + wf(\ba^{*}) 
    &< \bv \cdot \bx + wf(\bx) wc + \bv'_{m-1} \cdot \bb +  v'_m g(c,\bb) \\
    &< wz + \bv'_{m-1} \cdot \by +  v'_m g(z,\by)
\end{align*}
Also by assumption
\begin{align*}
  \bv \cdot \bx& + \bv'_{n-1} \cdot \by + v'_{m}h(\bx,\by)\\
  &< \bv \cdot \ba^{*} + \bv'_{m-1} \cdot \bb^{*} + v'_{m}h(\ba^{*},\bb^{*}) \\
  &= \bv \cdot \ba^{*} +wf(\ba^{*}) - wf(\ba^{*}) + wc - wc +  \bv'_{m-1} \cdot \bb^{*} \\
  &\quad \quad + v'_{m}g(c,\bb^{*}) \\
    &< \bv \cdot \bx + wf(\bx) - wf(\ba^{*}) + wz +\\
    &\quad \quad\bv'_{m-1} \cdot \by + v'_mg(z,\by) - wc \\
    &= \bv \cdot \bx + \bv'_{m-1} \cdot \by + v'_mg(z,\by)\\ 
    &= \bv \cdot \bx + \bv'_{m-1} \cdot \by + v'_mg(c + f(\ba^{*}) - f(\bx),\by) \\
    &= \bv \cdot \bx + \bv'_{m-1} \cdot \by + v'_mh(\bx,\by)
\end{align*}
a contradiction.

\end{proof}

\subsection{Many-to-Many Composition}
What does it mean to compose two AMMs that share multiple assets?
We will see that this definition requires some care.

Here is the most obvious definition.
Let $W,X,Y,Z$ be asset types.
Consider two 3-dimensional AMMs, $A$ defined by $w x y = 1$ and 
$B$ defined by $x y z = 8$.
where $A$ is in state $(1,1,1)$ and $B$ in state $(2,2,2)$.
We want to compose them into a 2-dimensional AMM $(A \oplus B)$ between $W$ and $Z$
treating $X$ and $Y$ as ``hidden'' intermediate assets.

A trader adds $dx > 0$ units of $X$ to $(A \oplus B)$:
\begin{enumerate}
\item pick any $dy,dz \leq 0$ satisfying $(1+dx)(1+dy)(1+dz) = 1$,
yielding a profit-loss vector on $Y,Z$ of $(dy,dz)$.
\item Subtract this profit-loss vector from the first two components of $B$,
then solve for $dw \leq 0$ so that $(2-dy) (2-dz) (2+dw) = 8$.
\end{enumerate}
More generally,
let $A$ be initialized in $(a,b,c)$, $B$ in $(b',c',d)$.
and $A \oplus B$ in $(a,d)$.
Then $(w,z)$ is in $A \oplus B$ if there exist $dx,dy \leq 0$
(the assets transferred)
such that $(w, b+dx, c+dy)$ is in $A$ and $(b'-dx, c'-dy, z)$ is in $B$.

Here is why this na\"ive definition of composition is flawed.
As before, $A(w,x,y) := wxy= 1$ starts in state $(1,1,1)$, 
$B(x,y,z) := xyz = 8$ starts in $(2,2,2)$.
Let $dw = 3$, $dx= -\frac{1}{2}$, $dy= - \frac{1}{2}$, so $dz$ satisfies:
\begin{equation*}
  (2-dx)(2-dy)(2+dz) = \frac{25}{4}(2+dz) = 2 
\end{equation*}
implying $dz = -\frac{42}{25}$.
If instead $dw = 3$, $dx= -\frac{1}{4}$, $dy= - \frac{2}{3}$, then $dz$ satisfies:
\begin{equation*}
  (2-dx)(2-dy)(2+dz) = 6(2+dz) = 2 
\end{equation*}
meaning $dz = -5/3$.
So both $(4,\frac{8}{25})$ and $(4,\frac{1}{3})$ are valid states in the na\"ive composition,
violating the requirement that each coordinate is a function of the others.

We have just seen that AMMs joined by more than one hidden asset type
provide too many degrees of freedom to allow composition in a simple, well-defined way.
In a practical sense, however,
if two AMMs agree on a valuation for the hidden assets,
then it makes sense to transfer them in proportion
to their agreed-upon relative values.
To define composition for AMMs with multiple hidden assets,
we create a single virtual asset from a convex combination of
the hidden assets (\secref{virtualization}).
Although any convex combination would produce a well-behaved composition,
it makes sense to use the current market valuation,
if one exists.
By reducing the number of hidden assets to one,
virtualization reduces many-to-many composition to many-to-one composition, as analyzed in \secref{many-to-one}.

Our next theorem concerns stable points under this most general type of sequential composition.
\begin{theorem}
\sloppy
Let $(\bv,\bw,\bv')$ be a valuation, $\bv \in \Reals^n,\bw \in \Reals^k,\bv' \in \Reals^m$.
Let $(\ba^{*},\bb^{*})$ be the stable state for
valuation $(\bv,\bw) / \|(\bv,\bw)\|_1$ on the AMM $A(\bx,\by)$
and $(\bc^{*},\bd^{*},e^{*})$ the stable state for
$(\bw,\bv') / \|(\bw,\bv')\|_1$ on the AMM $B(\bz,\br,q)$.
If $A|\bw:=(\bx,f(\bx))$ and $B| \bw:=(z,\br,g(z,\br))$,
then $(\ba^{*},\bd^{*},h(\ba^{*},\bd^{*}))$,
where $e^{*} = h(\ba^{*},\bd^{*})$ is the stable state on $(A | \bw) \oplus (B | \bw)$
for the valuation $(\bv,\bv')  / \|(\bv,\bv')\|_1$.
\end{theorem}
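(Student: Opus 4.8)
The plan is to reduce this most general composition to the already-established many-to-one case by virtualizing the shared hidden assets, and then to chain \lemmaref{virtual-stable} together with the stable-point theorem for many-to-one composition. The first thing I would record is the elementary but crucial observation that a stable point is unchanged when its valuation is multiplied by any positive scalar: since a stable point minimizes $\bv \cdot \bx$ over the state space, rescaling $\bv$ by a positive constant leaves the minimizer fixed. This lets me ignore all of the $\|\cdot\|_1$ normalization factors throughout, treating valuations as directions rather than as points of $\int(\Delta^n)$.

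Next I would apply \lemmaref{virtual-stable} to each factor separately. For $A(\bx,\by)$, the hypothesis that $(\ba^*,\bb^*)$ is stable for $(\bv,\bw)$ yields, via \lemmaref{virtual-stable}, that the virtualized state $(\ba^*,f(\ba^*))$ is the stable point of $A|\bw = (\bx,f(\bx))$ for the valuation proportional to $(\bv,\|\bw\|_2^2)$; here the single virtual hidden asset inherits the value $\|\bw\|_2^2$. For $B(\bz,\br,q)$ the hidden block $\bz$ is the leading group rather than the trailing one, so I would first note that \lemmaref{virtual-stable} and the virtualization construction are invariant under permutation of coordinates (every AMM axiom is symmetric in its coordinates), and then apply it to conclude that the virtualized stable point of $B|\bw = (z,\br,g(z,\br))$ corresponds to the valuation proportional to $(\|\bw\|_2^2,\bv')$, again assigning the virtual asset the value $\|\bw\|_2^2$.

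The key step, and the one that makes the whole argument go through, is that \emph{both} virtualizations assign the single virtual hidden asset the identical value $\|\bw\|_2^2$. This is exactly the matching-of-middle-valuation hypothesis required by the many-to-one composition theorem: that theorem needs $A|\bw$ to be stable for $(\bv,w)$ and $B|\bw$ for $(w,\bv')$ with a common $w$, and here $w = \|\bw\|_2^2$ on both sides precisely because the two AMMs are virtualized against the same valuation $\bw$. This is the formal counterpart of the earlier informal remark that many-to-many composition only makes sense when the two AMMs agree on how to value the hidden assets.

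Finally I would invoke the many-to-one stable-point theorem of \secref{many-to-one} with $w = \|\bw\|_2^2$. Since $(A|\bw) \oplus (B|\bw)$ is by definition the many-to-one composition along the single virtual asset $z$, that theorem gives directly that $(\ba^*,\bd^*,h(\ba^*,\bd^*))$ is the stable point of the composition for the valuation proportional to $(\bv,\bv')$, i.e.\ for $(\bv,\bv')/\|(\bv,\bv')\|_1$. The identity $e^* = h(\ba^*,\bd^*)$ then drops out of the form of $h$ at the stable point: since $h$ evaluates $B|\bw$'s explicit function at the hidden-asset flow $f(\ba^*) - f(\ba^*) = 0$, we get $h(\ba^*,\bd^*) = g(c,\bd^*) = e^*$, matching $B$'s own last coordinate. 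The only real subtleties to watch are the coordinate-reordering for $B$ and the bookkeeping that the virtual asset's inherited value agrees across the two factors; everything else is a mechanical concatenation of the two cited results.
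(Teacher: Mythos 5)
Your proposal is correct and follows essentially the same route as the paper: virtualize the shared hidden assets on both sides, apply \lemmaref{virtual-stable} to each factor to obtain stability for the valuations proportional to $(\bv,\|\bw\|_2^2)$ and $(\|\bw\|_2^2,\bv')$, and then conclude via the many-to-one composition theorem of \secref{many-to-one}. You are in fact slightly more careful than the paper's own proof, which mis-cites \lemmaref{virtual-stable} a third time where the many-to-one theorem is what is actually needed, and which leaves implicit both the coordinate permutation for $B$ and the positive-scalar rescaling of valuations that you spell out.
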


\begin{proof}
Applying \lemmaref{virtual-stable} twice,
$(\ba^{*},f(\ba^{*}))$ is the stable point for valuation
$(\bv,\|\bw\|_2^2) / \|(\bv,\|\bw\|_2^2)\|_1$
and $(t^{*},\bd^{*},g(t^{*},\bd^{*}))$ is the stable point for valuation
$(\|(\bw\|_2^2,\bv') / \|(\|(\bw\|_2^2,\bv')\|_1$,
where $e^{*} = g(t^{*},\bd^{*})$.
Applying \lemmaref{virtual-stable},
$(\ba^{*},\bd^{*},h(\ba^{*},\bd^{*}))$ is stable on $(A | \bw) \oplus (B | \bw)$ with respect to
valuation \\ $(\bv,\bv')/ \|(\bv,\bv')\|_1$.
   
\end{proof}
This result shows that the composition of AMMs at stable points remains stable.

\section{Parallel Composition}
\seclabel{parallel}

Parallel composition arises when a trader is faced with multiple AMMs,
but wants to treat them them as if they were a single AMM.
In sequential composition,
the composed AMMs exchange ``hidden'' assets .
In parallel composition,
the composed AMMs compete for overlapping assets.

Suppose Alice wants to trade asset $X$ for asset $Y$.
Bob and Carol both offer AMMs to convert from $X$ to $Y$.
Bob's AMM is $B(x,y):=x^2y=\frac{3}{4}$ in state $(1,\frac{3}{4})$,
while Carol's AMM is $C(x,y):=x y = 1$ in state $(1,1)$.
Alice would like to compose the two AMMs and treat them as one AMM.
Bob provides a better initial rate of exchange for small trades,
but Carol provides less slippage for large trades.
One can check that if Alice converts 1 unit of $X$,
she gets more $Y$ assets from Bob than from Carol,
while if she converts 3 units,
she gets more from Carol.

This process is not the same as order-book clearing,
because order-book offers are typically expressed in terms of a fixed amount and a fixed price,
while parallel AMM offers are expressed in terms of price curves.
This type of composition occurs, for example,
when a trader is faced with multiple pools as in Uniswap v3~\cite{uniswapv3}.

A rational Alice will split her assets between
Bob and Carol to maximize her return.
Suppose $A(x,y)$ is in state $(a,f(a))$
and $B(x,y)$ in $(b,g(b))$.
Alice splits her $d$ assets,
transferring $t d$ to Bob's AMM
and $(1-t)d$ to Carol's,
returning
\begin{equation*}
  f(a) - f(a+tx) + g(b) - g(b+(1-t)x)
\end{equation*}
units of $Y$.
Let $h(x) = f(a+tx) + g(b+(1-t)x)$.
Define the \emph{parallel composition} of $B$ and $C$
with respect to $v=(t,1-t)$ to be.
\begin{equation*}
    (B\|C)(x,y) := y - h(x) = 0.
\end{equation*}
\begin{lemma}
  $(B\|C)(x,y)$ is a 2-dimensional AMM.
\end{lemma}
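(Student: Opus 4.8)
The plan is to verify the three defining properties of a 2-dimensional AMM directly for the function $(B\|C)(x,y) = y - h(x)$, where $h(x) = f(a+tx) + g(b+(1-t)x)$ and the split fraction satisfies $t \in (0,1)$ because $v = (t,1-t)$ is a valuation. Throughout I would rely on the standard properties of the two factor AMMs in explicit form: each of $f$ and $g$ is twice-differentiable, strictly decreasing, and strictly convex, the last of these coming from the identification of the upper contour set with the epigraph together with \lemmaref{epigraph}. I would also record that for $x > 0$ the inner arguments $a+tx$ and $b+(1-t)x$ remain positive (since $a,b,t,1-t > 0$), so $h$ is well defined, and $h(x) = f(a+tx) + g(b+(1-t)x) > 0$ as each summand is a genuine asset amount; this keeps the state space inside $\PosReals^2$ as the AMM convention requires.

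Twice-differentiability is immediate: $h$ is a sum of the twice-differentiable $f$ and $g$ precomposed with the affine maps $x \mapsto a+tx$ and $x \mapsto b+(1-t)x$, so $h$, and therefore $(B\|C)(x,y) = y - h(x)$, is twice-differentiable.

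For strict monotonicity, the partial derivative in $y$ is identically $1 > 0$, while the partial in $x$ is $-h'(x)$ with $h'(x) = t\,f'(a+tx) + (1-t)\,g'(b+(1-t)x)$. Since $f$ and $g$ are strictly decreasing, $f' < 0$ and $g' < 0$, and since $t, 1-t > 0$ this forces $h'(x) < 0$, hence $-h'(x) > 0$; thus $(B\|C)$ is strictly increasing in each coordinate. As in the one-to-one lemma, this can equally be written as a direct chain of inequalities rather than through derivatives.

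The substantive step is strict convexity of $\upper(B\|C)$. Exactly as in the one-to-one case, $\upper(B\|C) = \epi(h)$, so by \lemmaref{epigraph} it suffices to prove $h$ is strictly convex. The one fact worth isolating is that precomposing a strictly convex function with a non-degenerate affine map preserves strict convexity: writing $\ell(x) = a+tx$ with $t \neq 0$, for distinct $x, x'$ and $s \in (0,1)$ we have $\ell(sx + (1-s)x') = s\ell(x) + (1-s)\ell(x')$ with $\ell(x) \neq \ell(x')$, so strict convexity of $f$ yields $f(\ell(sx+(1-s)x')) < s\,f(\ell(x)) + (1-s)\,f(\ell(x'))$, and the same holds for $g$ with nonzero slope $1-t$. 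Since a sum of strictly convex functions is strictly convex, $h$ is strictly convex and the proof is complete. I expect this to be the only place requiring genuine care, and the crux is simply that both affine slopes $t$ and $1-t$ are nonzero --- precisely the content of $t \in (0,1)$; were either split fraction $0$ or $1$, the composition would collapse to a single factor AMM and the argument for that coordinate would degenerate.
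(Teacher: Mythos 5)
Your proposal is correct and follows essentially the same route as the paper: twice-differentiability is inherited from $f$ and $g$, strict monotonicity follows from $f$ and $g$ being strictly decreasing, and strict convexity of $\upper(B\|C)$ is reduced via \lemmaref{epigraph} to strict convexity of $h$, which the paper establishes by the same affine-precomposition inequalities you isolate as a standalone fact. Your derivative-based monotonicity check and your explicit remark that $t,1-t\neq 0$ keeps the affine maps non-degenerate are cosmetic refinements of the paper's inline inequality chains, not a different argument.
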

\begin{proof}
  $(B\|C)(x,y)$ is twice-differentiable because $f$ and $g$ are twice-differentiable.
  To check that $(B\|C)(x,y)$ is strictly increasing,
  let $x' \geq x$ and $y' \geq y$ where at least one inequality is strict.
  For the first case, suppose $x' > x$ and $y' \geq y$.
  \begin{align*}
    f(a+tx') &< f(a+tx) \\
    g(b+(1-t)x') &< g(b+(1-t)x) \\
    f(a+tx') + g(b+(1-t)x') &< f(a+tx) + g(b+(1-t)x) \\
    h(x') &< h(x) \\
    y - h(x') &> y-h(x)\\
    y' - h(x') &> y-h(x)
  \end{align*}
  The case where $x' \geq x$ and $y'> y$ is similar.

  To check that $\upper((B\|C)(x,y))$ is strictly convex,
  we can verify $h$ is strictly convex.
  Pick distinct $x,x'$.
  For $s \in (0,1)$,
  \begin{multline*}
    s f(a+t x) + (1-s) f(a+t x')\\
    > f(s (a+t x) + (1-s) (a+t x')) 
  \end{multline*}
  \begin{multline*}
    s g(b+(1-t)x) + (1-s) g(b+(1-t)x') \\
    >g(s (b+(1-t)x) + (1-s) (b+(1-t)x'))
  \end{multline*}
  \begin{multline*}
    s (f(a+t x) + g(b+(1-t)x))\\ + (1-s)(f(a + t x')+g(b+(1-t)x'))\\
    > f(s (a + t x) + (1-s) (a + t x'))\\ + 
    g(s (b+(1-t)x) + (1-s) (b+(1-t)x'))
  \end{multline*}
  \begin{equation*}
    s h(x) + (1-s)h(x') > h(s x + (1-s)x')  \\
  \end{equation*}
  which establishes the claim.
\end{proof}

\begin{lemma}
  Let $A := (x,f(x))$, $B := (y,g(y))$ be two AMMs trading assets $X$ and $Y$,
  such that $(a,f(a))$ and $(b,g(b))$ are their respective stable points
  for the valuation $(v,1-v)$.
  If $h_t(x) = f(a + tx) + g(b + (1-t)x)$,
  then $(0,h_t(0))$ is stable point on $A \| B$ with respect to $(v,1-v)$ for all $t \in \Reals$.
\end{lemma}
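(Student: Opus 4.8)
The plan is to recognize that a stable point of $A \| B$ for the valuation $(v,1-v)$ is, by definition, a minimizer of the dot product $\bv \cdot \bx$ over the composed manifold, and that this manifold is exactly the graph $\set{(x, h_t(x))}$. So I would reduce the claim to a one-variable optimization: show that $x = 0$ is the global minimizer of $\psi(x) := v\,x + (1-v)\,h_t(x)$. This is the value of $\bv \cdot \bx$ read off the parametrization $x \mapsto (x, h_t(x))$, so minimizing $\psi$ locates the stable point.

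First I would extract first-order conditions from the hypotheses. Since $(a,f(a))$ is the stable point of $A$ for $(v,1-v)$, it minimizes $v\,x + (1-v)f(x)$, which gives $f'(a) = -\frac{v}{1-v}$; the identical argument applied to $B$ gives $g'(b) = -\frac{v}{1-v}$. The crucial observation is that these two slopes coincide. Next I would differentiate: $\psi'(x) = v + (1-v)h_t'(x)$, where $h_t'(x) = t f'(a+tx) + (1-t)g'(b+(1-t)x)$. Evaluating at $x=0$ and substituting the matching slopes,
\begin{equation*}
h_t'(0) = t\Bigl(-\tfrac{v}{1-v}\Bigr) + (1-t)\Bigl(-\tfrac{v}{1-v}\Bigr) = -\tfrac{v}{1-v},
\end{equation*}
which is independent of $t$ because $t + (1-t) = 1$. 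Hence $\psi'(0) = v + (1-v)\bigl(-\tfrac{v}{1-v}\bigr) = 0$, so $x = 0$ is a critical point of $\psi$ for every $t \in \Reals$.

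Finally I would promote this critical point to a unique global minimizer by convexity. Because $f$ and $g$ are strictly convex, $h_t(x) = f(a+tx) + g(b+(1-t)x)$ is strictly convex: its second derivative $t^2 f''(a+tx) + (1-t)^2 g''(b+(1-t)x)$ is strictly positive since $t$ and $1-t$ cannot both vanish (this also re-derives the strict convexity already shown for $h$). Adding the linear term $v\,x$ keeps $\psi$ strictly convex, and a critical point of a strictly convex function is its unique global minimizer. Therefore $(0, h_t(0))$ is the stable point of $A \| B$ for $(v,1-v)$, for every $t \in \Reals$.

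I expect the main obstacle to be the first step: seeing that the two component stable-point conditions force $f'(a) = g'(b)$, so that the split-weighted slope $h_t'(0)$ collapses to that common value \emph{regardless of the split ratio} $t$. This is what makes the statement hold uniformly in $t$ and reflects the intuition that if both AMMs are already balanced for $(v,1-v)$, no reallocation of trade between them is profitable. The remaining ingredients — the first-order computation and the convexity argument upgrading a critical point to a global minimum — are routine.
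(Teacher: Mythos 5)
Your proof is correct, but it takes a genuinely different route from the paper's. The paper argues directly from the definition of a stable point: since $(a,f(a))$ minimizes $vx' + (1-v)f(x')$ it obtains $(1-v)f(a) < tvx + (1-v)f(a+tx)$, does the same for $B$, and simply adds the two inequalities, the key point being that the $tvx$ and $(1-t)vx$ terms recombine into $vx$. Your argument is the differential counterpart: the first-order conditions give $f'(a) = g'(b) = -v/(1-v)$, so $h_t'(0)$ is a $t$-independent weighted average of a common slope, and strict convexity upgrades the critical point to the unique global minimizer. The paper's version is more elementary (no derivatives, no appeal to uniqueness of minimizers of strictly convex functions) and works verbatim from the axioms; yours isolates the economic content more sharply --- both AMMs already quote the same marginal price, so no split of an infinitesimal trade is profitable --- and explains \emph{why} the statement is uniform in $t$.

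One technical point to tighten: you justify strict convexity of $h_t$ by asserting that $t^2 f''(a+tx) + (1-t)^2 g''(b+(1-t)x) > 0$, but strict convexity of $f$ does not force $f'' > 0$ pointwise (consider $x \mapsto x^4$ at the origin). The conclusion still holds: for any $t \in \Reals$ at least one of $t, 1-t$ is nonzero, so at least one of $x \mapsto f(a+tx)$, $x \mapsto g(b+(1-t)x)$ is the composition of a strictly convex function with a nonconstant affine map, hence strictly convex, and the other is at least convex; the sum is therefore strictly convex. Alternatively you can simply cite the preceding lemma of the paper, which already establishes strict convexity of $h$. With that repair your argument is complete.
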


\begin{proof}
By assumption we have
\begin{align*}
    v a + (1-v)f(a) &< v(a + tx) + (1-v)f(a + tx)\\
    (1-v)f(a) &< tvx + (1-v)f(a + tx)
\end{align*}
and
\begin{align*}
  vb + (1-v)g(b) &< v(b + (1-t)x) + (1-v)g(b + (1-t)x)\\
  (1-v)g(b) &< (1-t)vx + (1-v)g(b + (1-t)x),
\end{align*}
yielding
\begin{align*}
  v 0 + (1-v)h_t(0)
  &= (1-v)h_t(0) \\
  &= (1-v)f(a) + (1-v)g(b) \\
  &< tvx + (1-v)f(a + tx) + (1-t)vx \\
  &\quad \quad + (1-v)g(b + (1-t)x) \\
  &= vx + (1-v)(f(a + tx) + g(b + (1-t)x)) \\
  &= vx + (1-v)h(x)
\end{align*}
so $(0,h_t(0))$ is a stable point for $(v,(1-v))$.
\end{proof}

Parallel composition is well-defined for any valuation,
but what valuation should a rational Alice pick?
Differentiating with respect to $t$ yields
\begin{align}
  0 &= -x f'(a+t x) + x g'(b+(1-t)x)\nonumber\\
  x f'(a+t x) &= x g'(b+(1-t)x) \nonumber\\
  f'(a+t x) &= g'(b+(1-t)x).\eqnlabel{best-split}
\end{align}
Alice maximizes her return when she splits her
assets so that Bob and Carol end up offering the same rate.
Informally, if Bob had ended up providing a better rate,
then Alice should have given him a larger share.
If there is no $t \in (0,1)$ that satisfies \eqnref{best-split},
Alice should give all her assets to the AMM with the better rate.

How should parallel composition be defined for AMMs with multiple asset types?
Suppose Alice has some combination $\bd$ of assets in $X_1,\ldots,X_p$
that she wants to convert into some combination of assets in $Y_1,\ldots,Y_q$.
Alice has a choice of two alternative AMMs:
$B(x_1,\ldots,x_p,y_1,\ldots,y_q)$ and $C(x_1,\ldots,x_p,y_1,\ldots,y_q)$.
Perhaps the most sensible way to define parallel composition is through
asset virtualization.
Alice's input asset vector $\bd$ induces a valuation $\bd/ \|\bd\|_1$
which can be used to define a virtual asset $X$ from $X_1,\ldots,X_p$.
Alice chooses a valuation $\bv$ for her $Y_1,\ldots,Y_q$ outputs
(perhaps the market valuation)
which can be used to define a virtual asset $Y$ from $Y_1,\ldots,Y_q$.
After asset virtualization, the alternative AMMs have the form:
$\tilde{B}(\bx,t)$ and $\tilde{C}(\bx,t)$,
and the definition of parallel composition proceeds as before.

The next lemma describes properties of stable points for two 2-dimensional AMMs composed in parallel.

If $(\bx,f(\bx))$ and $(\by,g(\by))$ are two $n$-dimensional AMMs in states $(\ba,f(\ba))$ and $(\bb,g(\bb))$,
we can define parallel composition as follows.
For $\bt \in [0,1]^{n-1}$,
let $h_{\bt}(\bx) = f(\ba + \bt * \bx) + g(\bb + (\bone - \bt )* \bx)$,
where $*$ is component-wise multiplication.

\begin{lemma}
  \lemmalabel{parallel-stable}
  Let $(\bv,1-\|\bv\|_1))$ be a valuation,
  and let $A := (\bx,f(\bx))$ and $B := (\by,g(\by))$.
  If $(\ba,f(\ba))$ and $(\bb,g(\bb))$ are both stable points with respect to $(\bv,1-\|\bv\|_1))$.
  then $(\bzero,h_{\bt}(\bzero))$ is the stable point on $A \| B$  for $(\bv,1-\|\bv\|_1)$
  for all $\bt \in \Reals^{n-1}$.
  \end{lemma}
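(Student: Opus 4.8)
The plan is to follow the scalar argument of the preceding lemma essentially verbatim, with scalar multiplication replaced by the component-wise product $*$, and to verify that the one place where scalars were combined still works coordinate-by-coordinate. Write $u = 1 - \|\bv\|_1$ for the weight the valuation places on the output asset, and note first that $h_{\bt}(\bzero) = f(\ba) + g(\bb)$, so the candidate stable state is $(\bzero, f(\ba) + g(\bb))$, i.e. the state in which Alice makes no trade. Since $(\ba,f(\ba))$ is the \emph{unique} stable point of $A$ for $(\bv,u)$ (by \lemmaref{stable-unique}) and $(\bb,g(\bb))$ likewise for $B$, I would record the two sharp stability inequalities, evaluated at the split points $\ba + \bt * \bx$ and $\bb + (\bone - \bt) * \bx$:
\begin{align*}
u\, f(\ba) &\le \bv \cdot (\bt * \bx) + u\, f(\ba + \bt * \bx),\\
u\, g(\bb) &\le \bv \cdot ((\bone - \bt) * \bx) + u\, g(\bb + (\bone - \bt) * \bx),
\end{align*}
each obtained by subtracting $\bv \cdot \ba$ (resp. $\bv \cdot \bb$) from both sides of the stability condition, and each strict precisely when its perturbation $\bt * \bx$ (resp. $(\bone - \bt) * \bx$) is nonzero.

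The key step is then to add these two inequalities. The two inner-product terms combine as $\bv \cdot (\bt * \bx) + \bv \cdot ((\bone - \bt) * \bx) = \bv \cdot (\bt * \bx + (\bone - \bt) * \bx) = \bv \cdot \bx$, where the last equality is the coordinate-wise identity $t_i x_i + (1 - t_i) x_i = x_i$; this is exactly the cancellation that made the scalar proof work, and it survives componentwise because $*$ distributes over addition. The two $u$-terms on the right collect into $u\, h_{\bt}(\bx)$ by the definition of $h_{\bt}$. The sum therefore reads $u(f(\ba) + g(\bb)) \le \bv \cdot \bx + u\, h_{\bt}(\bx)$, i.e. $\bv \cdot \bzero + u\, h_{\bt}(\bzero) \le \bv \cdot \bx + u\, h_{\bt}(\bx)$, which is the stability inequality for $A \| B$ at $\bzero$.

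The part that needs care, and the only real obstacle, is promoting this to a \emph{strict} inequality for every admissible $\bx \ne \bzero$, which is what ``stable point'' requires, and doing so for arbitrary $\bt \in \Reals^{n-1}$ rather than only $\bt \in [0,1]^{n-1}$. Here I would argue purely at the level of indices: if $\bx \ne \bzero$ there is a coordinate with $x_i \ne 0$, and since $t_i$ and $1 - t_i$ cannot both vanish, at least one of $(\bt * \bx)_i$ and $((\bone - \bt)*\bx)_i$ is nonzero; hence at least one of $\bt * \bx$, $(\bone - \bt) * \bx$ is a nonzero vector, so at least one of the two stability inequalities above is strict, and the sum is strict. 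This reasoning uses only that the two fractions sum to one, not that they lie in $[0,1]$, which is why the conclusion holds for all real $\bt$. I would also note that the split points $\ba + \bt * \bx$ and $\bb + (\bone - \bt) * \bx$ automatically lie in the domains of $f$ and $g$ whenever $(\bx, h_{\bt}(\bx))$ is a genuine state of $A \| B$, since $h_{\bt}(\bx)$ is only defined there, so applying the stability hypotheses is always legitimate.
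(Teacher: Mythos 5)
Your proof follows the same route as the paper's: record the two stability inequalities at the perturbed points $\ba + \bt * \bx$ and $\bb + (\bone - \bt) * \bx$, subtract the constant terms, and add, using the coordinate identity $t_i x_i + (1-t_i)x_i = x_i$ to recover $\bv \cdot \bx$. You are in fact slightly more careful than the paper, which writes both inequalities as strict even when one of the perturbations $\bt * \bx$ or $(\bone - \bt) * \bx$ vanishes; your observation that at least one of them is nonzero whenever $\bx \neq \bzero$ (since $t_i$ and $1 - t_i$ cannot both be zero) correctly supplies the strictness the conclusion needs.
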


  \begin{proof}
    For $\bt \in \Reals^{n-1}$,
    \begin{multline*}
      \bv \cdot \ba + (1-\|\bv\|_1)f(\ba)\\
      < \bv \cdot (\ba + \bt * \bx) + (1-\|\bv\|_1)f(\ba + \bt * \bx)
    \end{multline*}
    \begin{equation*}
      (1-\|\bv\|_1)f(\ba)
      < (\bt * \bx) \cdot \bv + (1-\|\bv\|_1)f(\ba + \bt * \bx)
    \end{equation*}
    and
    \begin{multline*}
      \bv \cdot \bb + (1-\|\bv\|_1)g(\bb)\\
      < \bv \cdot (\bb + (\bone - \bt) * \by) + (1-\|\bv\|_1)g(\bb + (\bone - \bt) * \by)
    \end{multline*}
    \begin{multline*}
      (1-\|\bv\|_1)g(\bb) \\
      < ((\bone - \bt) * \by) \cdot \bv + (1-\|\bv\|_1)g(\bb + (\bone - \bt) * \by),
     \end{multline*}
implying
\begin{align*}
  \bv \cdot \bzero + &(1-\|\bv\|_1)h_t(\bzero) \\
  &= (1-\|\bv\|_1)h_t(\bzero) \\
  &= (1-\|\bv\|_1)f(\ba) +(1-\|\bv\|_1)g(\bb) \\
  &< (\bt * \bx) \cdot \bv + (1-\|\bv\|_1)f(\ba + \bt * \bx) \\
  &\quad \quad+ (\bone-\bt) * \bx \cdot \bv + \\
  &\quad \quad (1-\|\bv\|_1)g(\bb + (\bone-\bt) * \bx)\\
    &= \bv \cdot \bx + (1-\|\bv\|_1)(f(\ba + \bt \cdot \bx) \\
    &\quad \quad + g(\bb + (\bone-\bt) \cdot \bx))  \\
    &= \bv \cdot \bx + (1-\|\bv\|_1)h(\bx)
\end{align*}
so $(0,h_t(0))$ is a stable point for $(\bv,1-\|\bv\|_1)$.
\end{proof}
Most generally, if we have two AMMs $A(\bx,\bz)$ and $B(\by,\bz')$, and valuation $\bw$,
we can write $A| \bw$ as $(\bx,f(\bx))$ and $B| \bw$ as $(\by,g(\by))$.
We then can define parallel composition as before.

\begin{theorem}
    Let $(\bv,\bv')$ be a valuation, and let $A(\bx,\bz)$ and $B(\by,\bz')$ two AMMs.
    If $(\ba^{*},\bb^{*})$ is the stable point for $A$,
    and $(\bc^{*},\bd^{*})$ the stable point for $B$,
    both with respect to $(\bv,\bv')$,
    then $(\bv,\|\bv'\|_2^2)/\|(\bv,\|\bv'\|_2^2)\|_1$ is the stable point for $(A | \bv' ) \| (B | \bv')$.
\end{theorem}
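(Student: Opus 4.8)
The plan is to obtain this result as a direct composition of two facts already in hand: the behavior of stable points under virtualization (\lemmaref{virtual-stable}) and under parallel composition (\lemmaref{parallel-stable}). The whole argument is a two-step reduction, with no fresh optimization to perform. Throughout, I abbreviate the target valuation as $\hat{\bv} = (\bv,\|\bv'\|_2^2)/\|(\bv,\|\bv'\|_2^2)\|_1$.

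First I would apply \lemmaref{virtual-stable} to $A(\bx,\bz)$, virtualizing the $\bz$-block with $\bv'$. Since $(\ba^{*},\bb^{*})$ is the stable point of $A$ for $(\bv,\bv')$, the lemma yields that $(A|\bv') = (\bx,f(\bx))$ has stable point $(\ba^{*},f(\ba^{*}))$ for the valuation $\hat{\bv}$. I would then apply the same lemma to $B(\by,\bz')$, again virtualizing with $\bv'$; because $(\bc^{*},\bd^{*})$ is stable for $B$ under the same $(\bv,\bv')$, this gives that $(B|\bv') = (\by,g(\by))$ has stable point $(\bc^{*},g(\bc^{*}))$ for the \emph{same} valuation $\hat{\bv}$. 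The step that makes the chaining work is the observation that the valuation emitted by \lemmaref{virtual-stable} depends only on $\bv$ and the virtualizing valuation $\bv'$, not on the AMM being virtualized; since both AMMs are virtualized with the same $\bv'$ and both were stable for the common $(\bv,\bv')$, the two outputs coincide, namely $\hat{\bv}$.

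With both virtualized AMMs now in the single-virtual-asset form $(\bx,f(\bx))$ and $(\by,g(\by))$ and stable for the common valuation $\hat{\bv}$, the hypotheses of \lemmaref{parallel-stable} are met. Writing $\hat{\bv}=(\bar{\bv},1-\|\bar{\bv}\|_1)$ by peeling off the virtual-asset coordinate --- legitimate precisely because $\hat{\bv}$ is already normalized to lie in $\int(\Delta)$ --- \lemmaref{parallel-stable} concludes that the parallel composition $(A|\bv')\|(B|\bv')$ has its stable point for $\hat{\bv}$ at $(\bzero,h_{\bt}(\bzero))$ for every split $\bt$, the point at which both virtualized AMMs remain at their respective stable states. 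This is exactly the assertion of the theorem.

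The hard part will not be any calculation but the compatibility bookkeeping at the seam between the two lemmas: one must confirm that the $\|\bv'\|_2^2$ weight produced by virtualization lands in the normalized simplex form that \lemmaref{parallel-stable} expects, and that the two separately virtualized AMMs genuinely share both the retained asset block and the single virtual asset, so that $(A|\bv')\|(B|\bv')$ is even well-defined. Both points follow once $\hat{\bv}$ is recognized as a valuation in $\int(\Delta)$ and the common virtualizing vector $\bv'$ is noted, so I anticipate no genuine difficulty beyond careful notation.
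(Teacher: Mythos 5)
Your proposal matches the paper's own proof: two applications of \lemmaref{virtual-stable} (one to each AMM, virtualizing with the common $\bv'$) followed by a single application of \lemmaref{parallel-stable}, concluding that $(\bzero,h_{\bt}(\bzero))$ is stable for $(\bv,\|\bv'\|_2^2)/\|(\bv,\|\bv'\|_2^2)\|_1$. The extra normalization bookkeeping you flag is reasonable diligence but does not change the argument, which is the same as the paper's.
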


\begin{proof}\sloppy
  \lemmaref{virtual-stable} implies that $(\ba^{*},f(\ba^{*}))$ and $(\bc^{*},g(\bc^{*}))$
  are stable on $A | \bv'$ and $B | \bv'$,
  both with respect to $(\bv,\|\bv'\|_2^2)$. 
\lemmaref{parallel-stable} implies that $(\bzero,h_{\bt}(\bzero))$
is the stable point for
$(A | \bv' ) \| (B | \bv')$ with respect to valuation $(\bv,\|\bv'\|_2^2)$.
\end{proof}

\section{Fees}
\seclabel{fees}
In practice, each AMM trade incurs a \emph{fee}:
each trader's deposit includes a small additional fee
added directly to the AMM's capitalization to benefit the liquidity providers.
So far, we have neglected fees because the amounts involved
are expected to be small in relation to AMM capitalization.
Nevertheless, in this section,
we show that AMM fees can be modeled as the sequential composition
of a no-fee AMM with a simple \emph{linear AMM}.
For brevity, we restrict our attention to AMMs that manage two assets.

Here is how fees work for an AMM such as Uniswap v1.
Let $A := (x,g(x))$, currently at state $(a,g(a))$,
and let $\gamma, 0 < \gamma < 1$, be the fee parameter.
\begin{itemize}
\item
  The trader sends $\delta$ units of $X$ to $A$.
\item
  $(1-\gamma)\delta$ units of $X$ are traded for $g(a) - g(a +(1-\gamma)\delta)$ of $Y$.
\item
  A fee of $\gamma \delta$ units of $X$ is deposited directly in $A$'s pool.
\end{itemize}
$A$'s final state is $(a + \delta, g(a + (1-\gamma)\delta))$.

A \emph{linear AMM} exchanges assets at a constant rate,
governed by a constraint function:
\begin{equation*}
  \lambda \cdot \bx = c,
\end{equation*}
where $\lambda$ is a constant vector, and $c > 0$ a constant.
A linear AMM \emph{does not} satisfy all our common-sense axioms:
because its exchange rate is fixed, it is not expressive,
and although its curve is convex, it is not strictly convex,
so it does not have unique stable points.
A stand-alone linear AMM $L$ would be a poor investment for providers,
because if the market rate diverges from the AMM's fixed rate,
perhaps so that $L$ overprices $X$ and underprices $Y$,
then arbitrage traders will exchange $X$ for $Y$ until $L$'s $Y$ reserve is depleted.

Even if stand-alone linear AMMs are not useful in practice,
they provide a convenient formal device for modeling AMMs with fees.
For example,
consider the 2-dimensional AMM $L := (x,f(x))$ that trades between $X$ and $X$,
where
\begin{equation*}
f(x) = (1-\gamma)(a + \delta - x),  
\end{equation*}
for $\gamma, 0 < \gamma < 1$.
Note that
\begin{equation}
\eqnlabel{fee-f}
f(a) = (1-\gamma) \delta \qquad \text{and} \qquad f(a+\delta) = 0. 
\end{equation}
Suppose $A:=(x,g(x))$ starts in state $(a,g(a))$,
and linear $L:=(x,f(x))$ starts in state $(a,f(a))$.
As in \secref{sequential},
the sequential composition $L \otimes A$ is given by
\begin{equation}
\eqnlabel{fee-h}
h(x) = g(a + f(a) - f(a+x)).
\end{equation}
Sending $\delta$ units of $X$ to $L \otimes A$ returns
\begin{align*}
\eqnlabel{fee-delta}
  h(\delta)
  &= g(a + f(a) - f(a + \delta))\\
  &= g(a + (1-\gamma) \delta)
\end{align*}
The trade leaves $L \otimes A$ in state $(a + \delta, g(a + (1-\gamma)\delta))$,
precisely the behavior of $A$ augmented by a fee $\gamma$ levied on incoming assets.
An alternative structure where a fee is levied on outgoing assets
can be modeled by composing the AMMs in the reverse order,
with a linear AMM deducting $\delta \gamma$ units of the trade's output to its pool.

\section{Related Work}
\seclabel{related}
\balance
Angeris and Chitra~\cite{AngerisC2020}
introduce a \emph{constant function market maker} model
and consider conditions that ensure that agents who
interact with AMMs correctly report asset prices.
Our work, based on a similar but not identical AMM model,
focuses on properties such as defining AMM composition,
AMM topology, and the role of stable points.

\emph{Uniswap}~\cite{uniswap,zhang2018}
is a family of constant-product AMMs that originally traded between
ERC-20 tokens~\cite{erc20} and ether cryptocurrency.
Trading between ERC-20 assets requires sequential composition
of the kind analyzed in \secref{sequential}.
Uniswap v2~\cite{uniswapv2} added direct trading between
selected pairs of ERC-20 tokens,
and Uniswap v3 ~\cite{uniswapv2} allows liquidity providers
to restrict the range of prices in which their asset participate,
giving rise to a form of parallel composition of the kind analyzed in \secref{parallel}.

\emph{Bancor}~\cite{bancor} AMMs permit more flexible pricing schemes.
The state space manifold is parameterized by a \emph{weight},
where different weights yield different curves.
Bancor AMMs trade between ERC-20 assets and Bancor-issued BNT tokens.
Prices are a function of assets held and BNT tokens in circulation. 
Later versions~\cite{bancorv2} include integration with external
``price oracles'' to keep prices in line with market conditions.

\emph{Balancer}~\cite{balancer} AMMs trade across more than two assets.
Instead of constant product,
their state space is given by \emph{constant mean} formula
$c = \Pi_i^n x_i^{w_i}$,
where $c$ is constant, $x_i$ is amount of asset $X_i$ held by the contract,
and the $w_i$ are adjustable weights that form a valuation.

\emph{Curve}~\cite{curve} uses a custom curve specialized for trading across
multiple \emph{stablecoins},
digital assets whose values are tied to fiat currencies such as the
dollar, and likely to trade at near-parity.

There are many are more examples of AMMs:
see Pourpouneh \emph{et al.}~\cite{pourpouneh} for a survey.

Before there were AMMs for decentralized finance,
there were AMMs for \emph{event prediction markets},
where parties trade securities that pay a premium if
and only if some event occurs within a specified time.
A community of researchers has focused on prediction-market AMMs
\cite{AbernethyYV2011,ChenW2010,ChenP2007,Hanson2003,Hanson2007}.
Despite superficial similarities,
event prediction AMMs and security AMMs differ from DEFI AMMs
is important ways:
pricing models are different because
prediction outcome spaces are discrete rather than continuous,
prediction securities have finite lifetimes,
and composition of AMMs is not a concern.

We also note that the mathematical structure of AMMs resembles that of a consumer utility curve from classical economics ~\cite{micro}, where assets are replaced with goods.
The optimal arbitrage problem is not new.
A consumer choosing an optimal bundle of goods for a fixed set of prices is the same as an arbitrageur choosing an optimal point on an AMM with respect to a market valuation.
This is known as the \emph{expenditure minimization problem}~\cite{micro}.
While AMMs and consumer indifference surfaces are mathematically similar, they are different in application.
In particular,
traders interact with AMMs via composition,
an issue that does not arise in the consumer model.

\section{Conclusions}
\seclabel{conclusions}
Modern AMMs provide increasingly complex rules for trades.
For example, Uniswap v3~\cite{uniswapv3} allows liquidity providers to
choose to take trades only over finite ranges.
In the future,
we would like to understand how to define composition operators
for varieties of range-restricted AMMs,
but we hope the reader is convinced that understanding composition
of simple AMMs is already challenging,
and an important step to understand more general cases.

AMMs are increasingly being integrated with external price
information oracles.
For example,
Krishnamachari \emph{et al.}~\cite{krishnamachari2021dynamic}
describe a family of AMMs capable of adjusting their
curves in response to reported price changes.
In future work,
we plan to investigate composition for AMMs that make use of oracle services..

The work presented here is a first step toward analyzing
decentralized finance from a distributed computing perspective.
In future work,
we hope to consider more complex networks of AMMs,
the challenges of cross-chain AMMs,
synchronization problems such as front-running,
as well as more adaptive AMM structures.

\begin{acks}
This research was supported by NSF grant 1917990.
The authors are grateful to the conference referees for helpful remarks.
\end{acks}
\newpage
\bibliographystyle{ACM-Reference-Format}
\bibliography{references}

\end{document}